\documentclass[11pt, letterpaper]{article}
\usepackage[utf8]{inputenc}
\usepackage{amsmath}
\usepackage{amsfonts}
\usepackage{amsthm}
\usepackage{amssymb}
\usepackage{fullpage}
\usepackage[dvipsnames]{xcolor}
\usepackage{xspace}
\usepackage{graphicx}
\usepackage[colorlinks=true,allcolors=MidnightBlue]{hyperref}
\usepackage{cleveref}
\usepackage{comment}
\usepackage{subcaption}
\usepackage{url}
\usepackage{nicefrac}
\usepackage{afterpage}
\usepackage{fullpage}
\usepackage{enumerate}
\usepackage{nicefrac}
\usepackage{lineno}

\usepackage[backend=bibtex,maxbibnames=99,sortcites]{biblatex}
\addbibresource{library/abbrv,library/ETR,library/Simplicial}


\newtheorem{theorem}{Theorem}

\newtheorem{lemma}[theorem]{Lemma}
\newtheorem{claim}{Claim}
\newtheorem{cor}[theorem]{Corollary}
\newtheorem*{conjecture*}{Conjecture}

\newcommand{\R}{\ensuremath{\mathbb{R}}\xspace}
\newcommand{\ER}{\ensuremath{\exists\mathbb{R}}\xspace}
\newcommand{\N}{\ensuremath{\mathbb{N}}\xspace}
\newcommand{\A}{\ensuremath{\mathcal{A}}\xspace}
\newcommand{\NP}{\ensuremath{\mathrm{NP}}\xspace}
\newcommand{\PP}{\ensuremath{\mathrm{P}}\xspace}
\newcommand{\Tall}{\ensuremath{T^+}}
\newcommand{\Tint}{\ensuremath{T}}

\crefname{figure}{Figure}{Figures}
\crefname{theorem}{Theorem}{Theorems}
\crefname{lemma}{Lemma}{Lemmas}
\crefname{cor}{Corollary}{Corollaries}
\crefname{table}{Table}{Tables}
\crefname{section}{Section}{Sections}
\crefname{claim}{Claim}{Claims}

\newcommand{\GEM}[2]{{\ensuremath{\textsc{GEM}_{#1 \rightarrow #2}}}\xspace}
\newcommand{\PLEM}[2]{{\ensuremath{\textsc{Embed}_{#1 \rightarrow #2}}}\xspace}

\newcommand{\conv}{{\ensuremath{\textrm{conv}}}\xspace}

\newcommand{\stretcha}{{\ensuremath{\textsc{Stretchability}}}\xspace}
\newcommand{\seg}{{\ensuremath{\textrm{seg}}}\xspace}

\newcommand{\rafter}{rafter\xspace}
\newcommand{\htriangle}{helper triangle\xspace}

\newcommand{\sloop}{tunnel loop\xspace}
\renewcommand{\loop}{{\ensuremath{o}}\xspace}



\captionsetup[subfigure]{labelformat=simple}

\renewcommand\labelenumi{\roman{enumi})}
\renewcommand\theenumi\labelenumi

\title{Geometric Embeddability of Complexes is \ER-complete}
\author{
\parbox{4.5cm}{\center
{\textsc{Mikkel Abrahamsen}}\\[6pt]
\small
{miab@di.ku.dk}\\
{University of Copenhagen}}\and 
\parbox{4.5cm}{\center
{\textsc{Linda Kleist}}\\[6pt]
\small
{kleist@ibr.cs.tu-bs.de}\\
{TU Braunschweig}}\and 
\parbox{4.5cm}{\center
{\textsc{Tillmann Miltzow}}\\[6pt]
\small
{{t.miltzow@uu.nl}}\\
{Utrecht University}}\\}
\date{November 2021}


\begin{document}

\maketitle

\setcounter{page}{0}
\thispagestyle{empty}

\begin{abstract}
    We show that the decision problem of determining whether a given (abstract simplicial) $k$-complex has a geometric embedding in $\R^d$ is complete for the Existential Theory of the Reals for all $d\geq 3$ and $k\in\{d-1,d\}$.  This implies that the problem is polynomial time equivalent to determining whether a polynomial equation system has a real solution. Moreover, this implies NP-hardness and constitutes the first hardness result for the algorithmic problem of geometric embedding (abstract simplicial) complexes. 
\end{abstract}

\vfill

\begin{figure}[htb]
\centering
\includegraphics[scale=.65]{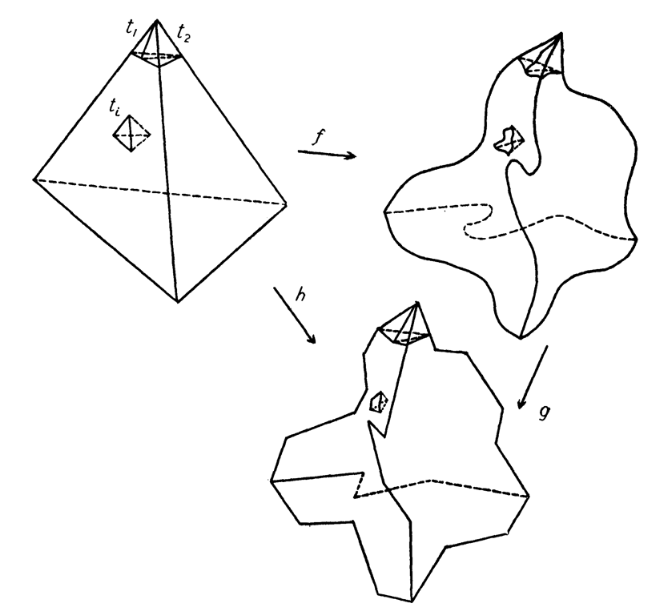}

\vspace{1.8cm}

\caption{Illustration of different embeddings of a complex; figure taken from Bing~\cite[Annals of Mathematics~$1959$]{bing59}.}
\label{fig:notions}
\end{figure}

\section{Introduction}
Since the dawn of the last century, much attention has been devoted to studying embeddings of complexes~\cite{gundert2018complexity, bing59, dieudonne2009history,grunbaum1969imbeddings,grunbaum1970,Menger-Dimension,shapiro1957obstructions,van1933komplexe}.
Typical types of embeddings include geometric (also referred to as linear), piecewise linear (PL), and topological embeddings, see also \cref{fig:notions}. For formal definitions, we refer to \cref{sec:definitions}; here we give an illustrative example.
Embeddings of a $1$-complex in the plane correspond to drawings of a graph in the plane. 
In a topological embedding, each edge is represented by a Jordan arc, in a PL embedding it is a concatenation of a finite number of segments, and in a geometric embedding each edge is represented by a segment. 
Unlike in higher dimensions,
for the embeddability of complexes in the plane all three notions coincide.

We are interested in the problem of deciding whether a given $k$-complex has a linear/piecewise linear/topological embedding in $\R^d$.
Several necessary and sufficient conditions are easy to identify and have been known for many decades. 
For instance, a $k$-simplex requires $k+1$ points in general position in $\R^d$ and, thus,  $k\leq d$ is an obvious necessary condition.
Moreover, it is straight-forward to verify that every set of $n$ points in $\R^3$ in general position allows for a geometric embedding of any 1-complex on $n$ vertices, i.e., the points are the vertices of a straight-line drawing of a (complete) graph.
Indeed, this fact generalizes to higher dimensions:
every $k$-complex embeds (even linearly) in~$\R^{2k+1}$~\cite{Menger-Dimension}. 
Van Kampen and Flores~\cite{flores1933n,skopenkov2014realizability,van1933komplexe} showed 
that this bound is tight by providing $k$-complexes that do not topologically embed into $\R^{2k}$.
For some time, it was believed that the existence of a topological embedding also implies the existence of a geometric embedding, e.g., 
Gr\"{u}nbaum conjectured that if a $k$-complex topologically embeds in~$\R^{2k}$, then it also geometrically embeds in~$\R^{2k}$~\cite{grunbaum1969imbeddings}. 
However, this was later disproven.
In particular, for every $k,d\geq 2$ with $k+1 \leq d \leq2k$, there exist $k$-complexes that have a PL embedding in $\R^d$, but no geometric embedding in $\R^d$~\cite{brehm-Linear-vs-PL, bokowski2000generation,brehm1983nonpolyhedral}. 
In contrast, PL and topological embeddability coincides in many cases, e.g.,  if $d\leq 3$~\cite{bing59,papakyriakopoulos1943} or $d-k \geq 3$ \cite{Bryant1972Approximating}.
There are many further  necessary and sufficient conditions known for geometric embeddings~\cite{alfonsin2005knots,novik2000note,JGAA-48,skopenkov2014realizability,timmreck2008necessary,Timmreck2015} and PL or/and topological embeddings~\cite{carmesin2019embedding,freedman1994,parsa2020,shapiro1957obstructions,ummel,skopenkov2016GraphProducts}.
\medskip

In recent years, the \textbf{algorithmic complexity} of deciding whether or not a given complex is embeddable gained attention.
In the absence of a complete characterization, an efficient algorithm is the best tool to decide embeddability.
For instance, deciding whether a  1-complex embeds in the plane corresponds to testing graph planarity and is thus polynomial time decidable~\cite{hopcroft1974efficient}. 
Conversely, the proven non-existence of efficient algorithms may offer a rigorous proof that a complete characterization is impossible. To give a concrete example, let \PLEM{k}{d} denote the algorithmic problem of determining whether a given
$k$-complex has a PL embedding in $\mathbb R^d$. 
Because \PLEM{4}{5} is known to be undecidable~\cite{undecidable2020,Skopenkov2020},
we have a proof that there does not exist an efficient algorithm for \PLEM{4}{5} -- even without any complexity assumptions (such as $\NP\neq \PP$ or similar).

More recently, there have been several breakthroughs concerning the \textbf{PL embeddability}.
For an overview of the state of the art, consider \cref{tab:resultsPL}.
\begin{table}[htb]
\centering
\caption{Overview of the complexity of \PLEM{k}{d}.}
\label{tab:resultsPL}
\includegraphics[page = 3]{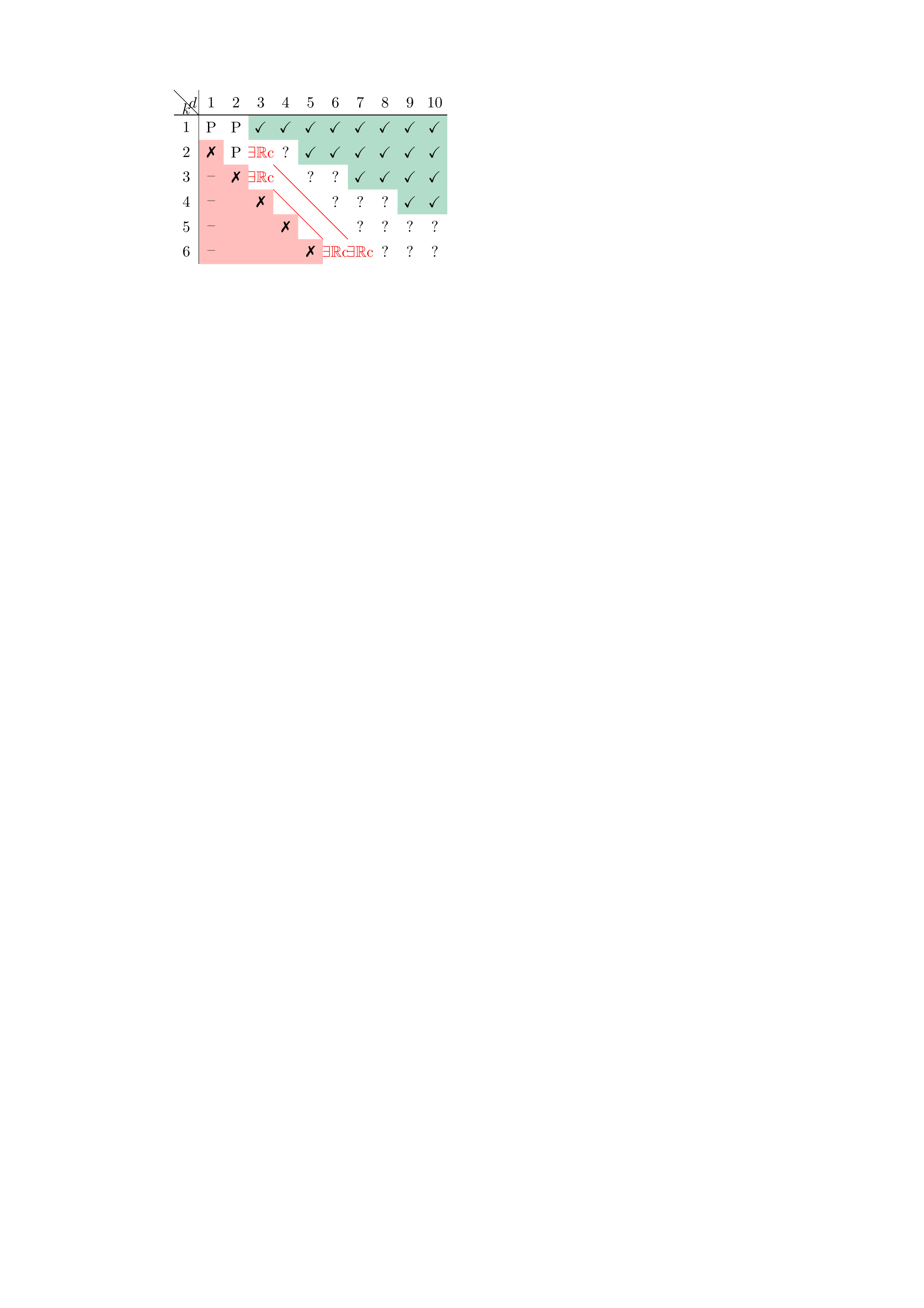}
\end{table}
In dimensions $d\geq 4$, the decision problem \PLEM{k}{d}  is polynomial-time decidable for $k < \nicefrac{2}{3}\cdot (d-1)$ \cite{vcadek2017algorithmic,vcadek2014computing,cadek2014polynomial,krvcal2013polynomial} and NP-hard for all remaining cases~\cite{matousek2011hardness}, i.e., for all $k \geq \nicefrac{2}{3}\cdot (d-1)$.
For $d\geq 5$ and $k\in\{d-1,d\}$, \PLEM{k}{d} is even known to be undecidable; for all other NP-hard cases and $d\geq 4$ decidability is unknown.
For the case $d=3$,  Matoušek, Sedgwick, Tancer, and Wagner have shown that \PLEM{2}{3} and \PLEM{3}{3} are decidable \cite{matouvsek2018embeddability} and de Mesmay, Rieck, Sedgwick, and Tancer have proved NP-hardness~\cite{mesmay2020embeddability}.

Building upon \cite{matousek2011hardness}, 
Skopenkov and Tancer~\cite{skopenkov2019hardness} proved NP-hardness for a relaxed notion called \emph{almost (PL/topological) embeddability} where  
it is only required that
disjoint sets must be mapped to disjoint objects, i.e., this notions allows that two edges incident to a common vertex cross in an interior point.
More precisely, they showed 
that  recognizing almost embeddability of $k$-complexes in~$\mathbb R^d$ is \NP-hard for all  $d,k\geq 2$ such that
 $d\pmod 3 =1$ and $\nicefrac{2}{3}\cdot (d-1)\leq k\leq d$.
\medskip

The analogous questions for \textbf{geometric embeddings} are wide open. Let  \GEM{k}{d} denote the algorithmic problem of determining whether a given $k$-complex has a geometric embedding in $\mathbb R^d$. In contrast to PL embeddability, however, it is easy to see that $\GEM{k}{d}$ is decidable for all $k,d$, since every instance can be expressed as a sentence in the first order theory of the reals, which is decidable.
In analogy to the PL embeddings, \NP-hardness has been conjectured by Skopenkov.

\begin{conjecture*}[\cite{skopenkov2020invariants}, Conjecture 3.2.2]
    \GEM{k}{d} is \NP-hard for all $k,d$ with  \[\max \{3,k\} \leq d
\leq  \nicefrac{3}{2}\cdot k + 1.\]
\end{conjecture*}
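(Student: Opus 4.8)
\emph{Proof plan.} Since $\NP\subseteq\ER$, it suffices to prove \ER-hardness, and the \ER-completeness established above for $k\in\{d-1,d\}$, $d\ge 3$, already yields the conjecture on the two boundary diagonals. The plan is to use these as base cases and to \emph{propagate} hardness across the whole region $\max\{3,k\}\le d\le \nicefrac{3}{2}\cdot k+1$ by polynomial-time reductions that raise $(k,d)$ in a controlled way. Two moves suffice in principle: the geometric \emph{suspension} $X\mapsto X*S^0$, which should send an instance of $\GEM{k}{d}$ to one of $\GEM{k+1}{d+1}$, and, more generally, the geometric \emph{join} $X\mapsto X*L$ with a fixed complex $L$. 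If $L$ is an $l$-complex whose smallest geometric embedding dimension is $e$ (so $L$ embeds in $\R^{e}$ but not $\R^{e-1}$), then $X*L$ should be geometrically embeddable in $\R^{d+e+1}$ exactly when $X$ is geometrically embeddable in $\R^{d}$, i.e.\ the reduction sends $(k,d)\mapsto(k+l+1,\,d+e+1)$.

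The region is covered by choosing $L$ appropriately. Taking $L=\partial\Delta^{m}\cong S^{m-1}$ (here $e=l=m-1$) keeps the codimension $d-k$ fixed and moves parallel to the diagonal, sweeping the two boundary diagonals into a family of parallel lines. To reach larger codimension, i.e.\ to descend toward the lower boundary $d=\nicefrac{3}{2}\cdot k+1$, one takes $L$ to be a low-dimensional but hard-to-embed complex, such as a van Kampen--Flores complex (an $l$-complex that does not embed in $\R^{2l}$, so $e=2l+1$); then the codimension grows by $e-l=l+1$ per application. A short computation shows that, starting from the bases $k_0\in\{d_0-1,d_0\}$ and joining with a suitable family of such complexes $L$, every target $(k,d)$ in the region with $k\ge 3$ is reached while staying inside $\nicefrac{2}{3}\cdot(d-1)\le k\le d$. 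This is precisely the regime \emph{outside} the metastable range $d\ge\nicefrac{3}{2}\cdot(k+1)$, consistent with embeddability being tractable above the line (so that no such reduction could exist there).

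The crux, and the reason the conjecture remains open beyond $k\in\{d-1,d\}$, is the \emph{converse} direction of the join/suspension reduction for \emph{geometric} embeddings. The forward direction is easy: a straight-line realization of $X$ in a $d$-flat together with a straight-line realization of $L$ in a skew $e$-flat assembles, by taking affine joins of simplices, into a straight-line realization of $X*L$ in $\R^{d+e+1}$. The converse is subtle because, unlike in the PL setting, an arbitrary \emph{linear} embedding of $X*L$ need not respect the product structure: the images of the two factors may be linearly interleaved, and the apex vertices of a suspension supply genuine additional freedom (exactly the phenomenon behind the known gap between geometric and PL embeddability). What is therefore required is a \emph{rigidity/normalization lemma}: every linear embedding of $X*L$ can be brought, by a projective transformation of $\R^{d+e+1}$, into a position in which the vertices of $L$ span a fixed $e$-flat and the copy of $X$ lies in a complementary $d$-flat, whence $X$ inherits a linear embedding in $\R^{d}$. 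Proving such a lemma -- controlling general position, affine independence, and the absence of ``diagonal'' realizations -- is the main obstacle, and is exactly the point at which the geometric problem departs from its PL analogue.

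Finally, two loose ends must be handled directly rather than by propagation. First, the join moves always produce $k\ge 3$, so the low-$k$ boundary cases -- most notably $\GEM{2}{4}$, which sits exactly on the line $d=\nicefrac{3}{2}\cdot k+1$ -- are not reachable from the diagonal bases and need a dedicated construction, for instance by transferring the \NP-hardness of the PL problem $\PLEM{2}{4}$ to the geometric setting via instances engineered so that every PL embedding straightens to a linear one. Second, one checks that each reduction is polynomial in the size of the input complex and that the non-integral threshold $\nicefrac{2}{3}\cdot(d-1)$ is handled by the appropriate floor, so that all integer pairs $(k,d)$ in the stated region are covered.
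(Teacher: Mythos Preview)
The statement you are addressing is a \emph{conjecture} in the paper, not a theorem, and the paper does \emph{not} prove it in full. The paper's contribution is exactly the two boundary diagonals $k\in\{d-1,d\}$ for $d\ge 3$ (its main theorem), obtained via an explicit \ER-hardness construction for $\GEM{2}{3}$ and $\GEM{3}{3}$ together with the suspension reduction $\GEM{k}{d}\le\GEM{k+1}{d+1}$. In the conclusion the authors leave all other entries open and in fact strengthen the conjecture to \ER-completeness on the whole region. So there is no ``paper's own proof'' to compare your attempt against.

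Your text is, accordingly, a research programme rather than a proof, and you yourself flag the central obstruction: the converse direction of the geometric join reduction for a nontrivial second factor~$L$. One point deserves correction, however. You list the suspension converse among the open ``crux'' steps, but the paper \emph{does} prove it: the dimension-reduction claim shows that $C$ embeds geometrically in $\R^d$ if and only if $C*\{a,b\}$ embeds in $\R^{d+1}$, the converse being a short projection argument (if the projection to the first $d$ coordinates failed to be injective on $\overline\varphi(C)$, the open segments from $\varphi(a)$ and $\varphi(b)$ to the two offending points would intersect illegally). The suspension therefore already propagates hardness along each line $d-k=\mathrm{const}$; what it cannot do is change that constant, so it never leaves the two base diagonals. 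The genuinely missing ingredient is the analogous biconditional when $L$ is, e.g., a van~Kampen--Flores complex, and there your sketched ``normalization lemma'' is not known. (A side remark: your parameter count for $L=\partial\Delta^m$ is off, since $\partial\Delta^m$ needs $m+1$ affinely independent vertices and hence has minimal geometric embedding dimension $e=m$, not $e=m-1$; joining with it would \emph{increase} the codimension by one rather than preserve it.) Until such a lemma---or an independent hardness construction for the interior of the region, including the isolated case $\GEM{2}{4}$ you single out---is established, the conjecture remains open, exactly as the paper states.
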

Note that these parameters correspond to the NP-hard cases of \PLEM{k}{d}, see also \cref{tab:resultsPL}.
Cardinal~\cite[Section 4]{CardinalSurvey} mentions 
{$\GEM{2}{3}$} as an interesting open problem.
The closely related question of polyhedral complexes, posed in the Handbook of Discrete and Computational Geometry, reads as follows:
   When is a given finite poset isomorphic to the face poset of some polyhedral complex in a given space $\R^d$?~\cite[Problem 20.1.1]{Chapter-Polyhedral-Maps}.
Note that simplicial complexes are special 
cases of polyhedral complexes,
because each simplex is a basic polyhedron.
The  recognition of polyhedral complexes (with triangles and quadrangles) {in $\R^3$} has been claimed to be $\ER$-complete~\cite[Theorem~5]{CardinalSurvey}.
Focussing on convex polytopes, Richter-Gebert proved that recognizing convex polytopes in $\R^4$ is \ER-complete~\cite{richterGebertrealization,richter1995realization}.

\paragraph{Our Results.}
In this work, we present the first results concerning Skopenkov's conjecture for any non-trivial entry with $d\geq 3$. 
More precisely, we establish the exact computational complexity  of \GEM{k}{d} for all values $d\geq 3$ and  $k \in \{ d-1,d\}$, hereby confirming the conjecture for these cases.
This includes a complete understanding of the most intriguing entries with~$d=3$. 
Note that this also answers the {computational aspects of the} question from the Handbook of Discrete and Computational Geometry.
\cref{tab:results} summarizes the current knowledge on the computational complexity of~\GEM{k}{d}.

\begin{theorem}
\label{thm:main}
For every $d\geq 3$ and each $k \in \{ d-1,d\}$,  the decision problem \GEM{k}{d} is \ER-complete.
Moreover, the statement remains true even if a PL embedding is given.
\end{theorem}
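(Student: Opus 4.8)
\emph{Plan.}
The first and easy half is membership, $\GEM{k}{d}\in\ER$. Given a $k$-complex $K$ with vertex set $V$, I would introduce $d\cdot|V|$ real variables for the coordinates of the vertex images $p_v\in\R^d$; a geometric embedding is exactly such an assignment together with its simplexwise-linear extension. Being a \emph{valid} embedding is then expressible by an existential first-order sentence over $(\R;+,\cdot,<)$: for every $\sigma\in K$ demand that $\{p_v:v\in\sigma\}$ be affinely independent (a determinant $\neq 0$), and for every pair $\sigma,\tau\in K$ with $\sigma\cap\tau=\rho$ as abstract simplices, introduce an auxiliary normal vector $a$ and scalar $b$ and demand $\langle a,p_v\rangle=b$ for $v\in\rho$, $\langle a,p_v\rangle>b$ for $v\in\sigma\setminus\rho$, and $\langle a,p_v\rangle<b$ for $v\in\tau\setminus\rho$. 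A short convexity argument shows such a separating hyperplane exists precisely when $\conv(p_v:v\in\sigma)\cap\conv(p_v:v\in\tau)=\conv(p_v:v\in\rho)$ (the case $\rho=\emptyset$ in particular forcing distinct vertices to distinct points). All auxiliary quantifiers are existential, so pulling them to the front gives an \ER-sentence equivalent to geometric embeddability; the input PL embedding, if present, is simply ignored, so the annotated version is in \ER\ too.

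For hardness the plan is to treat the base case $\GEM{2}{3}$ carefully and reduce from \stretcha\ (stretchability of pseudoline arrangements, equivalently realizability of rank-$3$ oriented matroids), which is $\ER$-complete. From an arrangement $\A$ I would construct a $2$-complex $K_\A$ together with an explicit PL embedding such that $K_\A$ embeds geometrically in $\R^3$ iff $\A$ is stretchable. The construction has three layers: (i) a \emph{canvas} subcomplex whose every geometric embedding is forced to lie in a single plane $\Pi$; (ii) for each pseudoline a \emph{line gadget} attached to the canvas, forced into $\Pi$, whose realization there is an honest straight segment; and (iii) \emph{incidence gadgets} built from \htriangle{} gadgets, \cRoofPath{} gadgets, and \sloop{} gadgets, attached so as to pin down the prescribed incidences and crossings and thereby the combinatorial type of $\A$. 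Then on $\Pi$ the line gadgets realize exactly a straight-line arrangement combinatorially isomorphic to $\A$, and conversely any stretched $\A$ can be inflated into a geometric embedding of $K_\A$; one also keeps $|K_\A|$ polynomial and ensures $K_\A$ always admits a PL embedding regardless of stretchability, which yields the ``even if a PL embedding is given'' strengthening at no extra cost.

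To reach the remaining parameters I would first obtain $\GEM{3}{3}$ by thickening $K_\A$ to a $3$-complex (attaching a bounded-size layer of tetrahedra, i.e.\ a triangulated solid neighborhood) chosen so that geometric embeddability is unchanged and a PL embedding persists. For higher dimensions I would iterate the cone operation $K\mapsto v*K$, which increases both the complex dimension and the ambient dimension by one: one checks that $v*K$ embeds geometrically in $\R^{d+1}$ iff $K$ embeds geometrically in $\R^d$ -- the easy direction puts $K$ into a coordinate hyperplane and lifts the apex off it, and the reverse direction recovers an embedding of $K$ from a suitable hyperplane cross-section of the apex's star (with some care, since the apex need not be ``visible''). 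Starting from $\GEM{2}{3}$ this produces all cases with $k=d-1$, and starting from $\GEM{3}{3}$ all cases with $k=d$; coning also preserves the existence of a PL embedding.

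The whole difficulty is concentrated in layer (i): building a $2$-complex that is PL-embeddable yet whose every \emph{linear} embedding is pinned into one plane, while still leaving enough freedom for the line and incidence gadgets to realize an arbitrary arrangement. This is exactly where ``geometric'' diverges from ``PL'', and the hard part will be verifying that the \cRoofPath{}/\sloop{} machinery achieves this rigid flattening and, simultaneously, transmits the combinatorial constraints of $\A$ faithfully in both directions of the reduction; by comparison, the membership argument, the $3$-complex thickening, and the coning step are routine.
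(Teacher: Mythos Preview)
Your membership argument and the overall architecture (reduce from \stretcha\ for $\GEM{2}{3}$, thicken to reach $\GEM{3}{3}$, then climb one dimension at a time) match the paper. The substantive gap is in the base case $\GEM{2}{3}$, where your sketched gadget is both unfinished and on a different, likely unworkable, track.

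You propose a ``canvas'' subcomplex that in every geometric embedding of $K_\A$ in $\R^3$ is forced into a single plane~$\Pi$, with line gadgets then pinned into~$\Pi$. But the constraints defining a geometric embedding are all open (affine independence and simplex disjointness are strict inequalities), so the set of valid placements of any vertex is an open subset of~$\R^3$ and hence never confined to a $2$-plane; no nontrivial subcomplex can be rigidly flattened in this sense. The paper does \emph{not} attempt to force planarity. Instead, each pseudoline~$\ell_i$ becomes a single \emph{special edge}~$e_i$ enclosed in a triangulated tunnel; tunnels are glued pairwise at the crossings of the arrangement; an apex~$u$ is joined by triangles to every tunnel edge visible ``from above''; and the link of~$u$ is padded to an essentially $3$-connected planar graph (two copies of this are glued at~$u$ so that in at least one copy $u$ lies outside all tunnels). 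From an arbitrary geometric embedding one projects onto a small sphere around~$u$: each $e_i$ becomes an arc of a great circle, the $3$-connectedness pins down the combinatorics of the projected tunnel roofs, and a chain of claims shows that these arcs cross in exactly the order prescribed by the pseudoline arrangement. A final central projection to a plane turns great-circle arcs into straight lines. So ``straightness'' is obtained via projection from the apex, not via any planar rigidity of a canvas; the \htriangle, \cRoofPath, and \sloop\ are features of this tunnel-and-apex construction, not devices for flattening.

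A smaller gap is your dimension step. A \emph{single} cone $v*K$ runs into exactly the visibility problem you flag: if $\varphi(v)$ lies in the convex hull of the other vertices, no hyperplane separates it from them, and central projection from $\varphi(v)$ identifies points of $\overline\varphi(K)$ lying on opposite rays through $\varphi(v)$, which a valid embedding of $v*K$ does not forbid. The paper instead joins with \emph{two} new vertices, $C^+:=C*\{a,b\}$, and shows that in any geometric embedding of $C^+$ in $\R^{d+1}$ the orthogonal projection killing the $\varphi(a)-\varphi(b)$ direction is injective on $\overline\varphi(C)$: if $p,q\in\overline\varphi(C)$ had the same projection, the open segments $\varphi(a)q$ and $\varphi(b)p$ would cross, contradicting the embedding. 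The second apex is precisely what makes this argument close.
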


\begin{table}[htb]
\centering
\caption{Overview of the computational complexity of \GEM{k}{d}.}
\label{tab:results}
\includegraphics[page = 5]{figures/tables}
\end{table}

Our proof implies that distinguishing between $k$-complexes with PL and geometric embeddings in $\mathbb R^d$ is complete for $\ER$.
Because $\NP\subseteq\ER$, our result confirms the conjecture by Skopenkov for the corresponding values of $k$ and $d$. 
Moreover, if $\NP\neq\ER$, the problem \GEM{k}{d} cannot be tackled with well developed tools for \NP-complete problems such as SAT and ILP solvers.
For more details, we refer to \cref{sec:ExReals}.
\medskip

A geometric embedding of a complex can also be viewed as a \emph{simplicial representation} of a hypergraph, i.e., a representation of a hypergraph in which every hyperedge is represented by a simplex. Of particular interest is the case of uniform hypergraphs where all hyperedges have the same number of elements. 
Thus, in the language of hypergraphs, our result reads as follows.
\begin{cor}
For all $d\geq 3$ and every $k \in \{ d-1,d\}$, deciding whether a $(k+1)$-uniform hypergraph has a simplicial representation in $\R^d$ is  \ER-complete. 
\end{cor}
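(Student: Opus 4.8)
The plan is to observe that the Corollary is simply a restatement of \cref{thm:main} once the dictionary between abstract simplicial complexes and uniform hypergraphs is made explicit, and then to supply that dictionary carefully enough that the polynomial-time equivalence of the two decision problems is transparent. First I would recall that a $(k+1)$-uniform hypergraph $H=(V,E)$ is precisely the data of a ground set $V$ together with a collection $E$ of $(k+1)$-element subsets of $V$; from such an $H$ one forms the \emph{downward closure} $\Delta(H)$, the abstract simplicial complex whose faces are all subsets of members of $E$ (together with the vertices of $V$ as $0$-faces). This $\Delta(H)$ is a pure $k$-complex, and conversely every pure $k$-complex $K$ arises this way by taking $H$ to be the hypergraph of its $k$-faces. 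The construction $H\mapsto\Delta(H)$ and its inverse are computable in time polynomial in the size of the input, since $\Delta(H)$ has at most $2^{k+1}\lvert E\rvert$ faces and $k$ is treated as fixed (it is either $d-1$ or $d$).

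Next I would note that a geometric embedding of $\Delta(H)$ in $\R^d$ is, by definition (see \cref{sec:definitions}), an assignment of points to the vertices such that the simplices spanned by the faces are non-degenerate and any two of them meet exactly in the simplex spanned by the intersection of the corresponding face sets; restricting attention to the top-dimensional simplices, this is exactly a \emph{simplicial representation} of $H$ in the hypergraph sense, i.e.\ a placement of the vertices so that each hyperedge spans a $k$-simplex and these simplices intersect only along shared sub-simplices. Conversely, a simplicial representation of $H$ immediately yields a geometric embedding of $\Delta(H)$, because the consistent intersection condition on the top simplices forces the correct intersection behaviour on all their faces. Hence $H$ has a simplicial representation in $\R^d$ if and only if $\Delta(H)$ geometrically embeds in $\R^d$, and likewise $H$ has a ``PL simplicial representation'' iff $\Delta(H)$ PL-embeds; the ``moreover'' clause of \cref{thm:main} therefore transfers verbatim.

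Combining these two observations, the map $H\mapsto\Delta(H)$ is a polynomial-time reduction from the hypergraph problem to $\GEM{k}{d}$, and the inverse map (restricting a pure $k$-complex to its $k$-faces) is a polynomial-time reduction the other way, so the two problems are polynomial-time equivalent. Since \cref{thm:main} asserts that $\GEM{k}{d}$ is \ER-complete for every $d\geq 3$ and $k\in\{d-1,d\}$, membership in \ER\ and \ER-hardness both carry over, which is exactly the statement of the Corollary. I do not anticipate a genuine obstacle here; the only point requiring care is making sure the intersection condition in the geometric-embedding definition and the intersection condition in the definition of a simplicial representation of a hypergraph are literally the same after passing to the downward closure — in particular that the vertices of $V$ not lying in any hyperedge (if the definition allows such isolated vertices) are handled consistently in both formulations.
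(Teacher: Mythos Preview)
Your approach is essentially the paper's own: the Corollary is presented there without proof, as a direct translation of \cref{thm:main} via the dictionary between pure $k$-complexes and $(k+1)$-uniform hypergraphs that you spell out (and that the paper records in \cref{sec:definitions}). So your overall plan is correct and matches the intended argument.

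There is, however, one small gap worth flagging. Your reduction ``the other way'' (taking $k$-faces) only applies to \emph{pure} complexes, so what you actually establish is a polynomial-time equivalence between the hypergraph problem and $\GEM{k}{d}$ \emph{restricted to pure inputs}. Invoking \cref{thm:main} as a black box then gives \ER-membership (via $H\mapsto\Delta(H)$) but not immediately \ER-hardness, since the statement of \cref{thm:main} concerns arbitrary $k$-complexes. To close the gap you need either (a) to observe that the hardness reduction behind \cref{thm:main} can be taken to output pure complexes, or (b) to supply a general polynomial-time reduction from $\GEM{k}{d}$ to its pure version. The paper does (a): the ``Fattening the Complex'' paragraph after \cref{thm:TwoThree} adds a private new vertex to every low-dimensional maximal face to obtain a pure complex with the same embeddability, and the join construction in \cref{lem:OneUp} preserves purity. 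Alternatively, (b) is easy and works in general: given any $k$-complex, repeatedly cone each maximal face of dimension $<k$ over a fresh vertex; the new vertices can always be placed $\varepsilon$-close to their parent simplices, so embeddability in $\R^d$ is unchanged. Either remark completes your argument.
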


\paragraph{Outline and techniques.}
Our proof of \cref{thm:main} consists of three steps:  Establishing \ER-membership,
showing \ER-hardness in~$\R^3$, i.e., of \GEM{2}{3} and \GEM{3}{3},
and reducing \GEM{k}{d} to \GEM{k+1}{d+1}. The core of the proof lies in establishing hardness of $\GEM{2}{3}$.

The main idea to prove hardness of $\GEM{2}{3}$ is to reduce from the problem \stretcha.
In \textsc{\stretcha}, we are given an arrangement of pseudolines (curves) in the plane and we are asked to decide whether  there exists a set of straight lines that has the same combinatorial pattern as the pseudoline arrangement, {see \Cref{fig:overview}(a) for an illustration and \cref{sec:definitions} for a formal definition.}
Given a pseudoline arrangement $L$, we construct a $2$-complex~$C$ {which} has a geometric embedding in $\R^3$ if and only if $L$ is stretchable.
\begin{figure}[p]
\centering
    \includegraphics[page = 5]{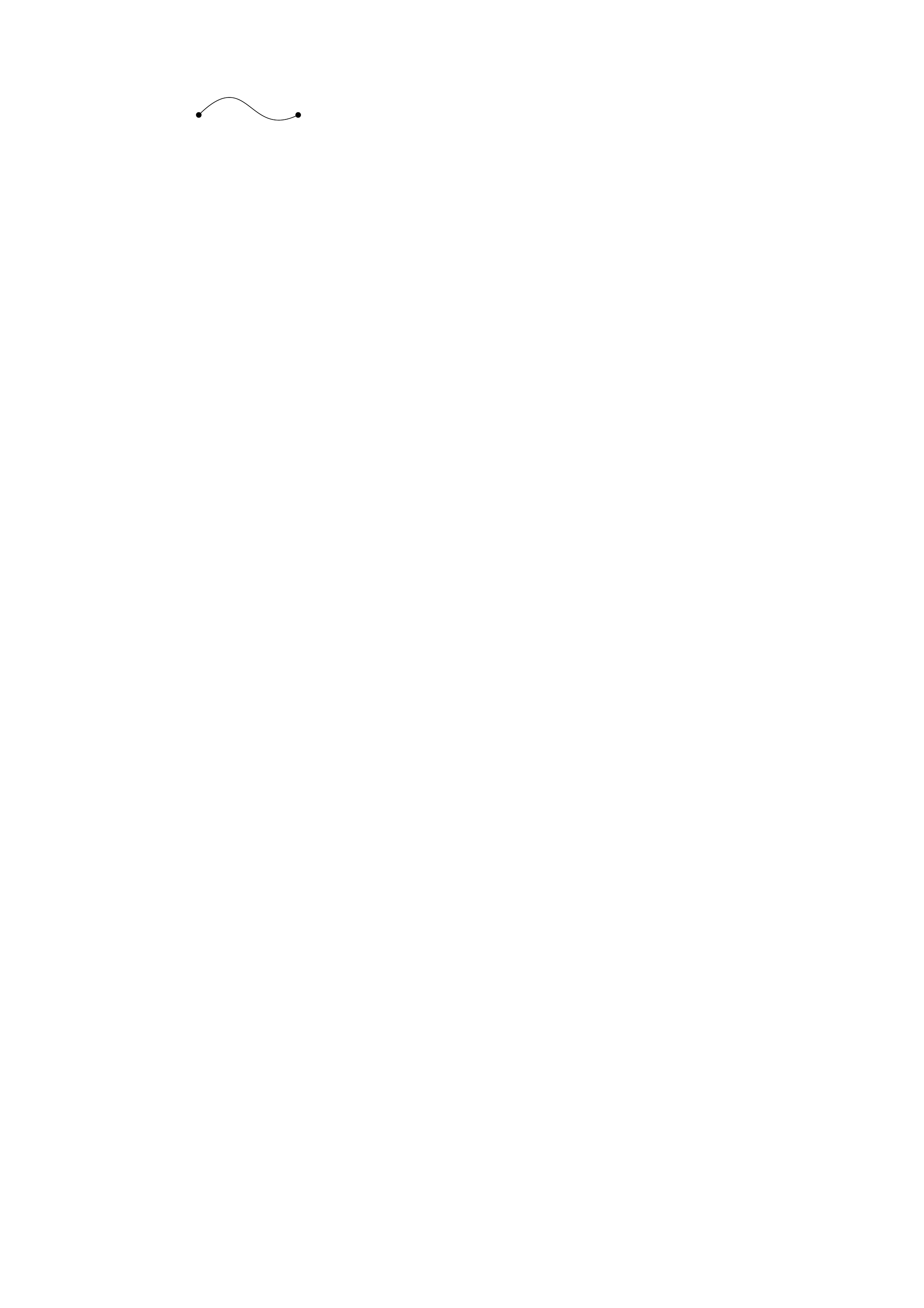}
\caption{(a) We start with a pseudoline arrangement $L$.
    (b) We add a triangle containing all intersections of $L$.
    (c) Each pseudoline is represented by a special edge that is surrounded by a tunnel.
    (d) Each tunnel consists of tunnel sections.
    (e) For the crossings of the special edges, we identify parts of the tunnels. 
    (f) We add an apex $u$ and insert triangles to the visible parts of the construction; we enhance the neighborhood of the apex to an essentially 3-connnected graph depicted in (i). Together, (e), (f) and (i) are crucial to enforce the correct combinatorics.
    (g) In the correctness proof, we use a small sphere around the apex and the projection  of each special edge onto the sphere.
    (h) We will argue that the combinatorics on the sphere are correct and then project the special edges onto a plane. 
    This will give a stretched arrangement.
    (i) The neighborhood graph of the apex $u$.}
    \label{fig:overview}
\end{figure} 
On a high level, our construction of $C$ goes along the following lines:
{We add a \htriangle that  contains all intersections of the pseudolines}, see \Cref{fig:overview}(b).
We place each pseudoline in $\R^3$ and replace it by a \emph{special edge} of the complex~$C$.
We surround the special edges by tunnels, see \Cref{fig:overview}(c) and (d). {For each crossing in $L$, } we glue the corresponding tunnel sections together, see \Cref{fig:overview}(e).
At last, we insert an {apex~$u$}  high above that is connected to all visible tunnel parts, see \Cref{fig:overview}(f)  
{and we insert additional objects in order to ensure that the neighborhood of $u$ is an essentially 3-connected graph, \Cref{fig:overview}(i).}

It is relatively straightforward to verify that if $L$ is stretchable, then the complex~$C$ embeds geometrically into $\R^3$.
{The other direction
requires more care and work:}
We show that a geometric embedding of $C$ induces a line arrangement with the same combinatorics as $L$.
{The idea of the proof is to consider a small sphere around the apex $u$ and to project its neighborhood and the special edges onto the sphere, see \Cref{fig:overview}(g). 
Because the neighborhood graph of $u$ is  essentially 3-connected by construction,  all its crossing-free drawings on the sphere are equivalent. This is an crucial property to show that each special edge lies in the projection of its tunnel roof (when restriction the attention to an interesting part within the \htriangle). 
We remark, that our proof does not show this explicitly. Instead, we establish some even stronger properties.
As a consequence, the projection of the tunnels have the intended combinatorics and thus also the special edges.
At last, we project the  arcs from the
sphere onto a plane, see \Cref{fig:overview}(h). 
In this way, we obtain a line arrangement with the same combinatorics as~$L$.
}

In order to show hardness of $\GEM{3}{3}$, we use a similar construction, in which  we ``fatten'' each triangle to a tetrahedron, by adding extra vertices.

We finally present a dimension reduction, i.e., we reduce $\GEM{k}{d}$ to $\GEM{k+1}{d+1}$.
Given a $k$-complex $C$, we create a $(k+1)$-complex $C^+$ that  contains $C$ and has two additional vertices $a$ and $b$.
Moreover, for each {subset} $e$ of $C$, $C^+$ has the additional {subsets} $e\cup\{a\}$ and $e\cup\{b\}$.
We prove that~$C$ geometrically embeds in~$\R^d$ 
if and only if~$C^+$  geometrically embeds in~$\R^{d+1}$.

In this way, we show that
distinguishing PL embeddable and geometrically embeddable complexes is \ER-complete.

\subsection{Existential Theory of the Reals}
\label{sec:ExReals}
The class of the existential theory of the reals \ER (pronounced as 
`exists R', `ER', or `ETR')
is a complexity class which has gained a lot of interest in recent years, specifically in the computational geometry community.
To define this class, we first consider the algorithmic problem \emph{Existential Theory of the Reals~(ETR)}. 
An instance of this problem consists of a sentence of the form
\[
\exists x_1, \ldots, x_n \in \R : \Phi(x_1, \ldots, x_n),
\]
where~$\Phi$ is a 
well-formed 
quantifier-free formula over the alphabet
$\{0, 1,  +, \cdot, \geq, >, \wedge, \vee, \neg\}$, and
the goal is to check whether this sentence is true. 
As an example of an ETR-instance, consider 
$\exists x,y \in \R\colon\Phi(x,y) = (x \cdot y^2 + x \geq 0) \wedge \neg(y < 2x)$, for which the goal is to determine whether there exist real numbers~$x$ and~$y$ satisfying the  formula~$\Phi(x,y)$.

The \emph{complexity  class \ER} is the family of all problems that admit a polynomial-time many-one reduction to ETR.
It is known that
\[
\textrm{NP} \subseteq \ER \subseteq \textrm{PSPACE}.
\]
The first inclusion follows from the definition of \ER. 
Showing the second inclusion was first established by Canny in his seminal paper~\cite{canny1988}. 
The complexity class \ER gains its  significance because a number of well-studied problems from different areas of theoretical computer science have been shown to be complete for this class.

Famous examples from discrete geometry are the recognition of geometric structures, such as unit disk graphs~\cite{mcdiarmid2013integer}, segment intersection graphs~\cite{matousek2014intersection}, \stretcha~\cite{mnev1988universality,shor1991stretchability}, and order type realizability~\cite{matousek2014intersection}.
Other \ER-complete problems are related to graph drawing~\cite{AnnaPreparation}, Nash-Equilibria~\cite{bilo2016catalog,garg2018etr}, geometric packing~\cite{etrPacking}, the art gallery problem~\cite{ARTETR}, non-negative matrix factorization~\cite{shitov2016universality}, polytopes~\cite{NestedPolytopesER,richter1995realization}, geometric linkage constructions~\cite{abel}, training neural networks~\cite{Training-Neural-Networks}, visibility graphs~\cite{cardinal2017recognition}, continuous constraint satisfaction problems~\cite{Reinier-CSP}, and convex covers~\cite{abrahamsen2021covering}.
The fascination for the complexity class stems not merely from the number of \ER-complete 
problems but from the large scope of seemingly unrelated \ER-complete problems.
We refer the reader to the lecture notes by Matou\v{s}ek~\cite{matousek2014intersection} and 
surveys by Schaefer~\cite{Schaefer2010} and Cardinal~\cite{CardinalSurvey} for more 
information on the complexity class~\ER.

\subsection{Definitions}
\label{sec:definitions}

\paragraph{Simplex.}
A  $k$-simplex $\sigma$ is a $k$-dimensional polytope which is the convex hull of its $k+1$ vertices~$V$, which are not contained in the same $(k-1)$-dimensional hyperplane.
Hence, a 0-simplex corresponds to a point, a 1-simplex to a segment, and a 2-simplex to a triangle etc.
The convex hull of any nonempty proper subset of $V$
is called a \emph{face} of~$\sigma$.
 A \emph{simplicial complex}  $K$ is a set of simplices satifying the following two conditions: (i) Every face of a simplex from $K$ is also in $K$.
(ii)  For any two simplices~$\sigma_1,\sigma_2\in K$ with a non-empty intersection, the intersection $\sigma_1\cap\sigma_2$ is a face of both simplices $\sigma _{1}$ and $\sigma _{2}$.
The purely combinatorial counterpart to a simplicial complex is an abstract simplicial complex, which we refer to simply as a \emph{complex}.

\paragraph{Complex.}
A \emph{complex} $C = (V,E)$ is a finite set $V$ together with a collection of subsets $E\subseteq 2^V$ which is closed under taking subsets, i.e., $e\in E$ and $e' \subseteq e$ imply that $e' \in E$.
A \emph{$k$-complex} is a complex where the largest subset contains exactly $k+1$ elements. We call a complex \emph{pure} if all (inclusion-wise) maximal elements in $E$ have the same cardinality.

For any vertex $v\in V$ in a $k$-complex $C=(V,E)$, the neighbourhood of $v$ gives rise to a lower dimensional complex $C_v:=(V',E')$, where $E':=\{e\setminus\{v\}\mid v\in e\in E\}$ and $V':=N(v)=\bigcup_{e\in E'} e$ are the \emph{neighbors} of~$v$.
Complexes are in close relation to Hypergraphs.

\paragraph{Hypergraphs.}
Hypergraphs generalize graphs by allowing edges to contain any number of vertices. 
Formally, a \emph{hypergraph}~$H$  is a pair $H = (V , E )$ where $V$ is a set of  vertices, and $E$ is a set of non-empty subsets of $V$ called \emph{hyperedges} (or edges). 
A $k$-uniform hypergraph is a hypergraph such that all its hyperedges contain exactly $k$ elements. 
Note that the maximal sets of a pure $k$-complex yield a $(k+1)$-uniform hypergraph and vice versa. Hence, $(k+1)$-uniform hypergraphs and pure $k$-complexes are in a straight-forward one-to-one correspondence.
A \emph{simplicial representation} of a $(k+1)$-uniform hypergraph is a geometric embedding of the corresponding complex.

\paragraph{Geometric embeddings.}
A \emph{geometric embedding} of a complex $C = (V,E)$ in $\R^{d}$ is a function $\varphi \colon V \rightarrow \R^d$ fulfilling the following two properties: (i) for every $e\in E$, $\overline \varphi(e):=
\conv(\{\varphi(v)\colon v\in e\})$ is a simplex of dimension $|e|-1$ and (ii) for every pair $e,e'\in E$, it holds that
\[ \overline\varphi(e) \cap \overline\varphi(e') = \overline\varphi(e\cap e').\]
 Note that if $\varphi$ is a geometric embedding, then $\{\overline\varphi(e): e\in E\}$ is a simplicial complex.
The problem \GEM{k}{d} asks whether a given  $k$-complex has a geometric embedding in $\R^{d}$.

\paragraph{Topological and PL embeddings.}
Consider a complex $C=(V,E)$.
In contrast to geometric embeddings, for PL or topological embeddings it is not sufficient to describe the mapping of the vertices $V$.
Choose $d'$ so large that $C$ admits a geometric embedding $\varphi': V\to \R^{d'}$, and define $S= \bigcup_{e\in E} \overline{\varphi'}(e)$.
We then say that an injective and continuous function $\varphi: S\to \R^d$ is a \emph{topological embedding} of $C$ in $\R^d$.
If furthermore for each $e\in E$, the image $\varphi(\overline{\varphi'}(e))$ is a finite union of connected subsets of $(|e|-1)$-dimensional hyperplanes, then $\varphi$ is a \emph{piecewise linear (PL) embedding}.
The problem \PLEM{k}{d} asks whether a given  $k$-complex has a PL embedding in $\R^{d}$.

\paragraph{Graph Drawings.}
A graph is a 1-complex. 
A graph is \emph{planar} if there exists a crossing-free drawing in the plane, i.e., a (topological) embedding in $\R^2$. As mentioned above, a graph has a topological embedding in $\R^2$ if and only if it has a geometric embedding in $\R^2$. A \emph{plane} graph is a planar graph together with a specified crossing-free drawing. 
By means of stereographic projection,  any graph that has a crossing-free drawing in the plane also has a crossing-free drawing on the sphere  and vice versa.
Two drawings of a graph (in the plane or on the sphere) are \emph{equivalent} if they can be transformed into one another by a homeomorphism (of the plane or the sphere). 
In particular, two equivalent drawings have the same set of faces. 
Consider a plane graph $G$ and let $D'$ be the specified drawing of $G$.
When talking about an (arbitrary) drawing $D$ of $G$, we always mean that~$D$ is equivalent to $D'$.

\paragraph{Stretchability.}
A \emph{pseudoline arrangement} is a family of curves that apart from `straightness' share similar  properties with a line arrangement. 
More formally,  a \emph{(Euclidean) pseudoline arrangement} is an arrangement of $x$-monotone curves in the Euclidean plane such that any two meet in exactly one point.
In fact, each pseudoline arrangement can be  encoded by a \emph{wiring diagram}; see also \cref{fig:pseudolines}.
A pseudoline arrangement is \emph{stretchable} if it is combinatorially equivalent to an arrangement of straight-lines, i.e., if the arrangements can be transformed into one another by a homeomorphism of the plane. \stretcha denotes the algorithmic problem of deciding whether a given pseudoline arrangement is stretchable. In a seminal paper, Shor~\cite{shor1991stretchability} proved that \stretcha  is complete for the existential theory of the reals; for a stream-line exposition of this result see the expository paper by Matou\v{s}ek \cite{matousek2014intersection}.

\subsection{Pitfalls}
\label{sub:pitfalls}
While the general proof ideas are fairly straightforward, our arguments in \Cref{sec:thm} may at first glance appear a bit tedious. 
In the following, we highlight one of the appearing challenges.
It is easy to see that each special edge lies inside its tunnel in any geometric embedding.
It follows that the projection of the special edge lies also inside
the projection of 
the tunnel on the sphere centered at the apex.
Furthermore, we know that the roof of the tunnels are seen by the apex.
One may be tempted to (directly) conclude that 
the projection of 
the special edge is thus also contained in
the projection of
the roof;
the underlying thought being that 
the  projection of 
the tunnel bottom lies below the tunnel roof in the geometric representation and thus the projection of the tunnel bottom is contained in 
the projection of
the tunnel roof.
Yet, 
the latter is not true in general,
 as can be seen in \Cref{fig:BottomSeen}.
 In the figure, the tunnel bottom  is not covered by the roof.
We (implicitly) show  that the projection of the
special edge lies inside the projection of the roof by establishing some even stronger topological and geometric properties.

\begin{figure}[htb]
\centering
\includegraphics[page = 1]{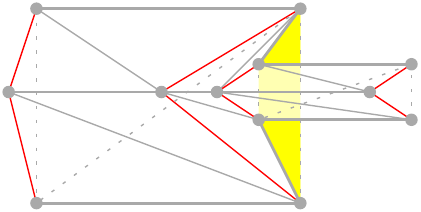}
\caption{
From the perspective of $u$, the tunnel bottom is not always hidden below the tunnel roof. From the three tunnel sections displayed, the bottom (yellow) of the middle section is partially visible from the apex.
}
\label{fig:BottomSeen}
\end{figure}
%

\section{The Proof}
\label{sec:thm}

In this section, we prove \cref{thm:main}. Our proof consists of the following three parts.

\begin{enumerate}[a)]
\setlength\itemsep{0pt}
\item Establishing \ER-membership (\cref{sec:membership}: \cref{lem:membership}).
\item Showing \ER-hardness in~$\R^3$, i.e., of \GEM{2}{3} and \GEM{3}{3}
(\cref{sec:2dim}: \cref{thm:TwoThree,lem:threethree}).
\item Reducing \GEM{k}{d} to \GEM{k+1}{d+1} (\cref{sec:dim-reduction}: \cref{lem:OneUp}).
\end{enumerate}
Together \cref{lem:membership,thm:TwoThree,lem:threethree,lem:OneUp} prove \Cref{thm:main}.

\subsection{Membership}
\label{sec:membership}
In this subsection, we show \ER-membership of \GEM{k}{d}.
Note that this is essentially folklore~\cite{vcadek2014extendability}.
We present a proof for the sake of completeness.
\begin{lemma}\label{lem:membership}
For all $k,d\in\N$, the decision problem \GEM{k}{d} is contained in \ER.
\end{lemma}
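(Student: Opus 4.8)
The plan is to express the existence of a geometric embedding of a given $k$-complex $C=(V,E)$ in $\R^d$ directly as an ETR sentence, and to argue that this translation is computable in polynomial time. First I would introduce $d\cdot|V|$ real variables, one block $x_v=(x_{v,1},\dots,x_{v,d})$ for each vertex $v\in V$, which are meant to encode the coordinates $\varphi(v)\in\R^d$. The whole sentence is then $\exists (x_v)_{v\in V}\colon \Phi$, where $\Phi$ is a conjunction of subformulas encoding the two defining conditions of a geometric embedding; the only thing to check is that each such condition is expressible in the allowed first-order language over $\{0,1,+,\cdot,\geq,>,\wedge,\vee,\neg\}$ and has polynomial size.

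The key steps, in order: (1) For condition (i), for every $e\in E$ with $|e|=j+1$, the requirement that $\overline\varphi(e)$ is a $j$-simplex is exactly affine independence of the points $\{\varphi(v):v\in e\}$; I would encode this by stating that some $j\times j$ minor of the matrix whose rows are the differences $\varphi(v)-\varphi(v_0)$ (for a fixed $v_0\in e$) is nonzero, i.e. a disjunction over all such minors of $(M>0)\vee(M<0)$, where each minor $M$ is a polynomial of degree $\le d$ in the variables — note that it suffices to do this only for the inclusion-maximal faces, since affine independence is inherited by subsets. (2) For condition (ii), $\overline\varphi(e)\cap\overline\varphi(e')=\overline\varphi(e\cap e')$ for all pairs $e,e'\in E$: the inclusion $\supseteq$ is automatic once (i) holds, so I only need $\subseteq$. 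A standard way to encode $\overline\varphi(e)\cap\overline\varphi(e')\subseteq\overline\varphi(e\cap e')$ without extra quantifiers is via separating hyperplanes: I would assert the existence of an affine functional (introduce $d+1$ fresh real variables $a_1,\dots,a_d,b$ for each relevant pair, or better, phrase it through explicit barycentric-coordinate elimination) that is $\le 0$ on all vertices of $e$, $\ge 0$ on all vertices of $e'$, and achieves both strict signs off the common face — more cleanly, since everything is convex and piecewise-polynomial, I can instead say: for every vertex $v\in e\setminus e'$ and every vertex $v'\in e'\setminus e$, and in fact for every pair of minimal ``bad'' faces, the corresponding simplices are disjoint, and disjointness of two simplices given by their vertex coordinates is itself an $\exists$-statement about a separating hyperplane (degree $1$ in the new variables, total size polynomial). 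Because adding an outer block of existential variables is harmless (all quantifiers are existential), these auxiliary separating-hyperplane variables can simply be appended to the prefix. (3) Finally I would count: the number of faces of $C$ is at most the size of $E$, hence at most the input size; for each face or pair of faces the subformula has size polynomial in $d$ and $|V|$; the sums of minors have $\binom{d}{j}\le 2^d$ terms, but $d$ is part of the input encoded in unary-ish fashion in a $k$-complex — I should be slightly careful here and note that $k\le d$ and that the determinant of a $j\times j$ matrix can instead be written with $O(j^2)$ auxiliary variables via Gaussian-elimination-style constraints or simply expanded when $d$ is treated as a constant; in the formulation of \GEM{k}{d} with $k,d$ fixed this is a non-issue, and for the uniform statement one uses the auxiliary-variable trick so that each determinant contributes only polynomially many symbols. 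Conclude that $\Phi$ has polynomial size and is computable in polynomial time, so \GEM{k}{d} reduces to ETR and lies in \ER.

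The main obstacle I anticipate is encoding condition (ii) economically and correctly: writing $\overline\varphi(e)\cap\overline\varphi(e')=\overline\varphi(e\cap e')$ naively involves universally quantified points of the simplices, which is not allowed in ETR. The fix — reducing the equality to a finite family of simplex-disjointness statements about appropriate sub-simplices and then encoding disjointness of two explicitly-given simplices by the existence of a separating hyperplane — is the crux; I would want to state and justify the reduction ``$\overline\varphi(e)\cap\overline\varphi(e')=\overline\varphi(e\cap e')$ iff for every pair of faces $f\subseteq e$, $f'\subseteq e'$ with $f\cap f'=\emptyset$ the simplices $\overline\varphi(f)$ and $\overline\varphi(f')$ are disjoint'' carefully, using convexity (an intersection point of $\overline\varphi(e)$ and $\overline\varphi(e')$ that lies outside $\overline\varphi(e\cap e')$ must, by Carathéodory/face structure, lie in the relative interior of $\overline\varphi(f)\cap\overline\varphi(f')$ for some such disjoint pair $f,f'$). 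The remaining steps are routine bookkeeping.
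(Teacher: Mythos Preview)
Your plan is sound, and the key reduction you flag---that under condition~(i), the family of equalities $\overline\varphi(e)\cap\overline\varphi(e')=\overline\varphi(e\cap e')$ is equivalent to disjointness of $\overline\varphi(f)$ and $\overline\varphi(f')$ for every pair $f,f'\in E$ with $f\cap f'=\emptyset$---is correct. One small imprecision: a bad point $x$ lies in $\operatorname{relint}\overline\varphi(f)$ and $\operatorname{relint}\overline\varphi(f')$ for the carrier faces $f\subseteq e$, $f'\subseteq e'$, but these need not themselves be disjoint; a short Radon-type argument on the two barycentric representations of $x$ then produces disjoint $P\subseteq f$, $N\subseteq f'$ with $\overline\varphi(P)\cap\overline\varphi(N)\neq\emptyset$. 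With that in hand, disjointness of two explicit simplices becomes an existential separating-hyperplane statement, and since $k,d$ are fixed parameters of \GEM{k}{d}, your $2^d$ worry is moot.

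The paper takes a genuinely different and shorter route. Rather than building an ETR sentence by hand, it invokes the real-RAM verification characterization of \ER{} due to Erickson, van der Hoog, and Miltzow: it suffices to give a polynomial-time real-RAM algorithm that, given the vertex coordinates as a real witness, checks the embedding conditions. The paper then observes that each simplex is described by a constant number of linear (in)equalities, so testing intersection of two simplices is a linear program solvable in polynomial time. This sidesteps all the formula bookkeeping you outline, at the price of importing an external black-box theorem. Your approach is more self-contained and makes the ETR encoding explicit; the paper's is terser but leaves exactly the reduction you identified (condition~(ii) $\Leftrightarrow$ pairwise disjointness checks) implicit in the phrase ``checks for intersections between any two simplices.''
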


\begin{proof}
In order to show membership in \ER, we use the following characterization by Erickson, Hoog and Miltzow~\cite{SmoothingGap}:
A problem~$P$ lies in \ER if and only if there exists a real verification algorithm $A$ for $P$
that runs in polynomial time on the real RAM. In particular,  for every yes-instance~$I$ of $P$ there exists a polynomial sized witness~$w$ such that
$A(I,w)$ returns yes, and 
for every no-instance~$I$ of $P$ and any witness~$w$,
$A(I,w)$ returns no.
In contrast to the definition of the complexity class NP,  we also allow witnesses that consist of real numbers. 
Consequently, we execute $A$ on the real RAM as well.

It remains to present a real verification algorithm for \GEM{k}{d}.
While the witness describes the coordinates of the vertices,  the algorithm checks for intersections between any two simplices.
Note that each simplex is a convex set and the intersection of convex sets is a convex set as well. 
For any simplex $S$ with $n$ vertices, we can efficiently determine $n$ linear inequalities and at most one linear equality that together describe $S$. Then checking for intersections can be reduced to a linear program, which is polynomial time solvable. This finishes the description of the real verification algorithm.
\end{proof}

\subsection{Hardness in three dimensions}
\label{sec:2dim}
This section is dedicated to proving \cref{thm:main} for $d=3$ and $k\in\{2,3\}$.
The crucial part lies in the case $k=2$.
For the benefit of the reader, we include a glossary in \Cref{tab:glossary}.
\begin{theorem}\label{thm:TwoThree}
 The decision problem \GEM{2}{3} is \ER-hard.
\end{theorem}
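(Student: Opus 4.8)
\textbf{Proof plan for \cref{thm:TwoThree}.}
The plan is to reduce from \stretcha, which is \ER-complete by Shor~\cite{shor1991stretchability}. Given a pseudoline arrangement $L$ (say, encoded by a wiring diagram with $n$ pseudolines and $\binom n2$ crossings), I would build, in polynomial time, a $2$-complex $C = C(L)$ such that $C$ has a geometric embedding in $\R^3$ if and only if $L$ is stretchable, and moreover such that $C$ always has a PL embedding in $\R^3$ (so that the final sentence of \cref{thm:main} comes for free). The construction follows the outline sketched around \cref{fig:overview}: place a large \htriangle in a plane that will host the stretched arrangement; realize each pseudoline as a \emph{special edge} of $C$; surround each special edge by a thin \emph{tunnel} built from triangulated \emph{tunnel sections}; at each crossing of $L$, glue the two relevant tunnel sections along identified faces so that the two special edges are forced to cross there; and finally add an \emph{apex} vertex $u$ far above, joined by triangles to the visible (roof) parts of all tunnels, with the local structure around $u$ padded out so that the link $C_u$ is an (essentially) $3$-connected planar graph (\cref{fig:overview}(i)).

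The easy direction is: if $L$ is stretchable, pick a straight-line arrangement combinatorially equivalent to $L$, lay the special edges along those lines in a plane, build the tunnels as thin tubes around them (narrow enough that distinct tunnels only meet where the corresponding lines cross, where the gluing is consistent), and place $u$ high enough to see all the roofs; this yields a genuine geometric embedding of $C$. For the hard direction I would start from an arbitrary geometric embedding $\varphi$ of $C$ and extract a stretched arrangement. First, each special edge is a straight segment contained in its tunnel (since the tunnel's boundary simplices separate it from everything else), and the roofs are visible from $\varphi(u)$. Consider a small sphere $S$ centered at $\varphi(u)$ and radially project $C_u$ together with the special edges onto $S$. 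Because $C_u$ is $3$-connected and planar, by Whitney's theorem its crossing-free drawing on $S$ is combinatorially unique, hence the cyclic/face structure of the projected tunnel roofs on $S$ is exactly the intended one. The delicate point flagged in \cref{sub:pitfalls} is that one cannot simply say ``the special edge projects into its roof because the bottom lies below the roof'' — \cref{fig:BottomSeen} shows a tunnel bottom peeking out. So I would instead prove, by a careful local analysis of each tunnel section and the gluings, a stronger invariant: the projection of each special edge stays inside the projection of its own roof within the relevant portion of the \htriangle, and the projected special edges realize exactly the crossing pattern encoded by the gluings, i.e.\ the pattern of $L$. Projecting the arcs on $S$ down to a plane then gives a straight-line arrangement combinatorially equivalent to $L$, so $L$ is stretchable.

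The main obstacle is precisely this hard direction: designing the apex link $C_u$ and the tunnel/gluing gadgetry so that $3$-connectivity genuinely pins down the combinatorics on the sphere, \emph{and} proving the strengthened invariant that controls where each special edge projects despite the bad-projection phenomenon of \cref{fig:BottomSeen}. This requires (i) making each tunnel section rigid enough (via the triangulation and its incidences with $u$) that its roof and bottom have a controlled relative position after projection, (ii) ensuring the identifications at crossings force a transversal crossing of the two special edges rather than some degenerate or avoidable configuration, and (iii) bookkeeping that the whole complex is still a valid abstract simplicial complex of dimension $2$ and that the reduction is polynomial in the size of the wiring diagram. The PL-embeddability addendum is handled by exhibiting an explicit PL embedding of $C$ (the tunnels can always be realized as PL tubes around PL curves following the wiring diagram, with $u$ placed above), independent of whether $L$ is stretchable; combined with \cref{lem:membership} this already yields \ER-completeness of distinguishing PL- from geometrically-embeddable complexes in $\R^3$.
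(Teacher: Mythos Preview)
Your plan follows the paper's approach closely, but two concrete devices are missing and without them the hard direction does not go through. First, you need the \emph{two-copies trick}: the paper builds $\overline C$ as you describe and then defines $C$ as two copies of $\overline C$ with their apices identified. The reason is that in an arbitrary geometric embedding nothing prevents $\varphi(u)$ from landing \emph{inside} one of the tunnels, in which case your assertion ``the roofs are visible from $\varphi(u)$'' and the whole sphere-projection argument break down. With two copies sharing $u$, the apex can lie inside at most one tunnel overall, so at least one copy has $u$ outside every tunnel (this is \cref{clm:apexAtInfinity}); you then work only with that copy. Second, you are missing the auxiliary pseudoline $\ell_0$ (added to the left of all crossings to get $L^*$). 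After you establish that the arcs $a_i$ on the sphere $S$ have the correct pairwise crossing pattern, you still need to project them to a plane to obtain straight lines; this only works if all intersections lie in a single hemisphere, and $\ell_0$ (via its arc $a_0$) provides the great circle that cuts off such a hemisphere.

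A smaller point: your justification that each special edge lies inside its tunnel (``since the tunnel's boundary simplices separate it from everything else'') is not quite the argument. The tunnel is a topological sphere incident to both endpoints $s_i,t_i$ of $e_i$, so a priori $e_i$ could start to either side. The paper uses an auxiliary \emph{tent} sphere formed by the triangles from $u$ to the first tunnel section (\cref{clm:inTunnel}): once $u$ is known to be outside the tunnel, if $e_i$ left the tunnel at $s_i$ it would be trapped in the tent and could not reach $t_i$. Finally, the role of the \htriangle is more than hosting the arrangement: in the correctness proof it separates the interesting part $D$ of the apex link from the endpoints $s_j,t_j$ (\cref{clm:interestingPart}), which is what lets you control where the \sloop vertices sit relative to the wedge $W_i$ and hence prove the rafter-intersection properties of \cref{clm:order}. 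Your plan should make these roles explicit.
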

\begin{proof}

We reduce from the \ER-hard problem \stretcha, as described in \cref{sec:definitions}.
Let $L$ be an arrangement of $n$ pseudolines in the plane. Every pseudoline arrangement has a representation as a wiring diagram in which each pseudoline is given by a monotone curve consisting of $2n-1$ sections. 
For an illustration consider \cref{fig:pseudolines}; each section could be represented by a segment, however for a visual appealing display, the bend points are rounded.
We add a pseudoline $\ell_{0}$ that intersects all pseudolines in the beginning, see \cref{fig:pseudolines}, and call the resulting pseudoline arrangement $L^*$. Note that $L^*$ is stretchable if and only if $L$ is stretchable. For later reference, we endow a natural orientation upon each pseudoline from left to right.
\begin{figure}[ht]
\centering
    \begin{subfigure}[t]{0.45\textwidth}
\centering
\includegraphics[page = 1]{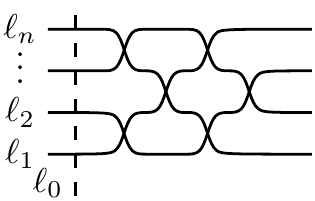}
\caption{A pseudoline arrangement $L^*$.}
\label{fig:pseudolines}
\end{subfigure}
\hfil
\begin{subfigure}[t]{0.45\textwidth}
\centering
\includegraphics[page = 2]{figures/Construction42.pdf}
\caption{The crossing diagram contains an additional \htriangle.}
\label{fig:helperTriangle}
\end{subfigure}
\caption{Adding an extra pseudoline $\ell_0$ and the \htriangle $\triangle$ to the construction.}
\end{figure} 
In the following, we construct a 2-complex $C=(V,E)$ that allows for a geometric realization if and only if $L^*$ (and thus $L$) is stretchable. In order to define $C$, we add a \htriangle $\triangle$ to our arrangement that intersects the pseudolines of $L^+$ as illustrated in \cref{fig:helperTriangle}.

\paragraph{Construction of the 2-complex.}
In order to define $C$, we use the intended geometric embedding. We will refer to the subsets in $C$ as vertices, edges, and triangles depending on whether they contain one, two or three elements. 
The construction has 
five
steps.
\medskip

\textbf{In the first step}, we place the pseudolines and the \htriangle $\triangle$ in 3-space.  Each pseudoline $\ell_i$  lies in the plane $z=i$ such that an observer high above (at infinity) sees the wiring diagram. Similarly, we place the segments of the \htriangle $\triangle$ in 3-space such that it lies in the plane $z=n+1$.  Note that no two pseudolines intersect. Therefore,  we can surround each lifted pseudoline by a triangulated sphere which we call a \emph{tunnel}; see also \cref{fig:tunnel}.
The tunnel $\Tall_i$ of~$\ell_i$ is formed by $2n+3+i$ sections; 
{later, we will be particularly interested in a part of a tunnel, denoted by $\Tint_i$, in which the first two and last two sections are removed}. 
Each section consists of six triangles forming a triangulated triangular prism as illustrated in \cref{fig:section}. We close the tunnel with triangles at the ends and  think of the \emph{bottom} side of the prism to lie in the plane~$z=i-\nicefrac{1}{2}$ (for now).
The remaining part of the tunnel, i.e., the tunnel without its bottom, constitutes the \emph{roof}, see \cref{fig:tunnelAbove}. 
The roof contains three disjoint paths of length $2n+4+i$.
The edges and vertices on the boundary of the bottom and the roof form the \emph{left} and \emph{right roof path}, respectively, when deleting the edges of the closing triangles. 
The remaining vertices induce the \emph{central roof path}. The three roof paths are thickened in \cref{fig:tunnel,fig:tunnelAbove}.

\begin{figure}[htb]
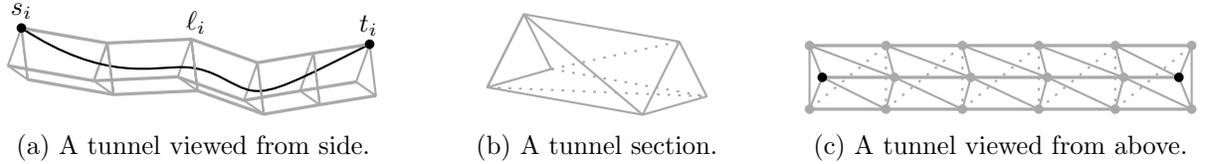

\centering
\begin{subfigure}[b]{0.35\textwidth}
\centering
\includegraphics[page = 3]{figures/Construction42.pdf}
\caption{A tunnel viewed from side.}
\label{fig:tunnel}
\end{subfigure}
\hfill
\begin{subfigure}[b]{0.25\textwidth}
\centering
\includegraphics[page = 5]{figures/Construction42.pdf}
\caption{A tunnel section.}
\label{fig:section}
\end{subfigure} 
\hfill
\begin{subfigure}[b]{0.35\textwidth}
\centering
\includegraphics[page = 6]{figures/Construction42.pdf}
\caption{A tunnel viewed from above.}
\label{fig:tunnelAbove}
\end{subfigure} 

\caption{First step in the construction of the complex $C$ -- tunnel construction.}
\label{fig:Construction1}
\end{figure}

Note that we do not add a tunnel for the \htriangle.
We distribute the sections along $\Tall_i$ to edges and crossings of the crossing diagram as follows: Generally, we associate one section per edge and one section per crossing of two pseudolines. Moreover, we associate one extra section of~$\Tall_j$ to a crossing of $\ell_i$ and $\ell_j$ whenever $i<j$.
In order to represent the pseudoline $\ell_i$, we insert a \emph{special edge} $e_i$  between the two top vertices on either end of the tunnel; for later reference, we denote the start vertex by  $s_i$ and the end vertex by $t_i$. The special edge~$e_i$ is intended to lie inside the tunnel. 
\medskip

\textbf{In the second step}, we identify parts of the tunnels. To this end, consider the tunnel sections assigned to a crossing of a pseudoline $\ell_i$ with $\ell_j$, $i<j$. Recall that we assigned one section of $\Tall_i$ and two sections of $\Tall_j$ to the crossing. We identify the four triangles in the bottom of the two sections of $\Tall_j$ with the four triangles in the roof of one section of $\Tall_i$ as indicated in \cref{fig:gluing}. Note that we hereby identify six vertices, four of which belong to a left or right roof path of both, $\Tall_i$ and $\Tall_j$. 
\medskip

\begin{figure}[htb]
\centering
\includegraphics[page = 7]{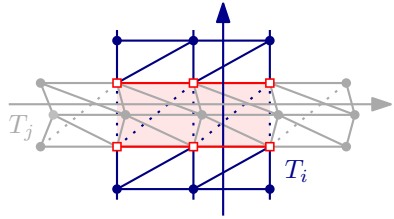}
\caption{Second step in the  construction of the complex $C$: gluing of tunnel parts.}
\label{fig:gluing}
\end{figure}
%

\textbf{In the third step}, we add a new vertex to the construction that we call the \emph{apex} and denote it by~$u$. We think of $u$ as the observer high above (at infinity) and insert a triangle defined by~$u$ and the vertices of every tunnel edge that is \emph{visible} from $u$. Note that every roof section that is neither glued in a crossings nor hidden by the \htriangle is visible. Moreover, no bottom  of any tunnel is visible in the intended geometric embedding.
\medskip

\textbf{In the fourth step}, we enhance the $1$-complex induced by the neighborhood $N(u)$ of the apex~$u$
such that it corresponds to an essentially $3$-connected planar graph~$G^+$. We call a graph \emph{essentially $3$-connected} if it is a subdivision of a $3$-connected graph. 
With the description so far, the $1$-complex corresponds to the graph $H$ depicted in black in \cref{fig:3-connected}.
\begin{figure}[htb]
	\centering
	\includegraphics[page = 8]{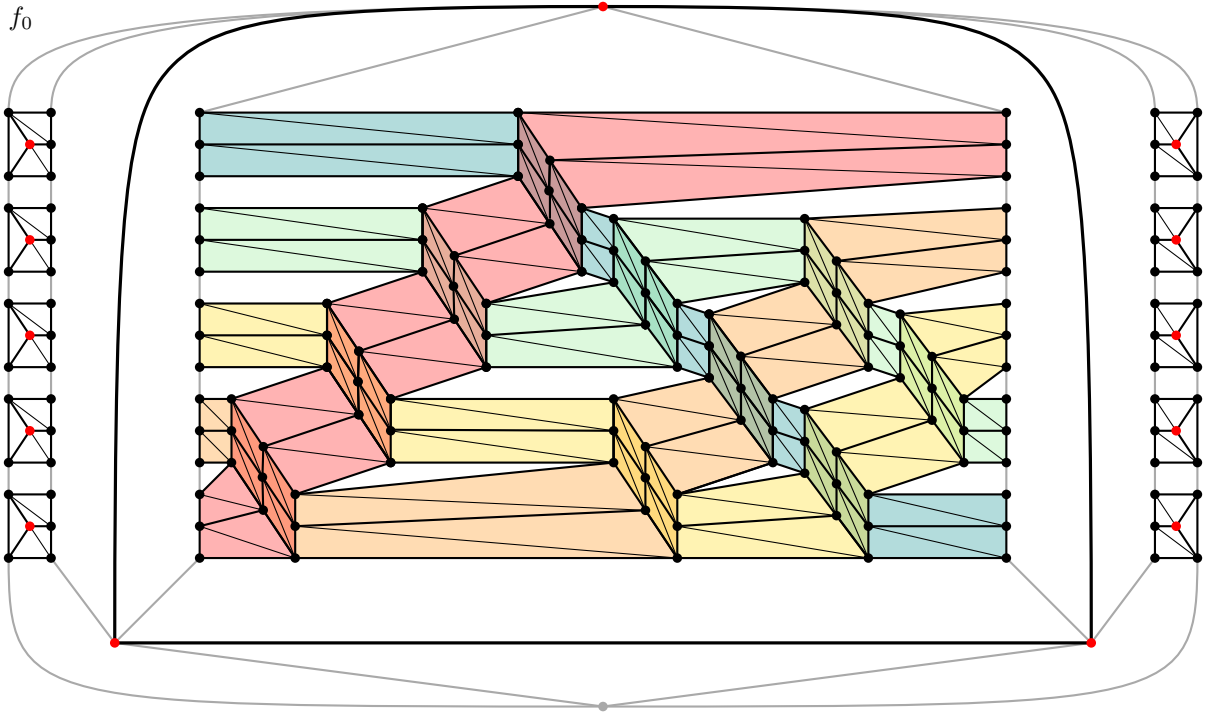}
	\caption{
		Third and fourth step in the construction of the complex $C$: neighborhood of the apex~$u$. The graph $H$ depicted in black after the third step. Together with the gray edges, the graph is a candidate for the essentially 3-connected plane graph $G^+$ and it subgraph $G$ inside $\triangle$.}
	\label{fig:3-connected}
\end{figure} 
To construct $G^+$, we make use of the following fact.

\begin{claim}\label{clm:enhancing}
    For every plane graph $G_1=(V_1,E_1)$, there exists an essentially 3-connected plane graph $G_2=(V_2,E_2)$ such that $G_1$ is a subgraph of $G_2$ and any
    straight-line drawing~$D_1$ of $G_1$ in the plane can be extended to a straight-line drawing of $G_2$. Moreover, if the maximum face degree of $G_1$ is
    $k$,
    then the size of $G_2$ 
    can be bounded by $|V_2|+|E_2|\leq  O(k|V_1|)$.
\end{claim}
\begin{proof}
    In order to construct $G_2$, we start with a drawing of $G_1$. 
    {First, we ensure that $G_1$ has at least 4 vertices. Second, we guarantee 2-connectedness by inserting edges; the edges are represented by potentially non-straight curves. To this end, we iteratively insert edges between different connected components; this ensures connectedness. Afterwards we insert one vertex in each face and insert one edge to each incident vertex of this face; this ensures 2-connectedness.
    Third, we triangulate each face by repeating the last step: we insert a vertex into each face and insert an edge to every vertex incident to this face. By 2-connectedness,} we obtain a triangulation $T$ on at least four vertices and hence, a 3-connected plane graph. Lastly, we subdivide each new edge so that the number of subdivision vertices equals to the degree of the face in $G_1$ in which the edge has been inserted. The resulting graph is $G_2$ and by construction essentially 3-connected.
    {Note that $T$ has  $O(|V_1|)$ vertices. Therefore, $G_2$ has $O(k|V_1|)$ edges.}
    
    It remains to show that any straight-line drawing $D_1$ of the plane graph $G_1$ can be extended to a straight-line drawing of the planar graph $G_2$. Because $D_1$ and $G_1$ have the same set of faces, we can insert the additional edges of $T$ by polylines. Following the face boundary, the number of bends on each edge is upper bounded by the degree of the face of $G_1$ in which it is inserted. Hence, we can easily extend $D_1$ to a straight-line drawing of  $G_2$.
\end{proof}

Let $G^+:=G_2$ be an essentially 3-connected plane graph guaranteed by \cref{clm:enhancing} for the case that $G_1=H$. {Note that $G_1$ has $O(n^2)$ vertices and edges, and every face has degree $O(n)$. Hence, the size of $G_2$ is in $O(n^3)$.} {We denote the outer face of $G_2$ by $f_0$.}
 The reader is invited to think about the far more sparse graph depicted in \cref{fig:3-connected}, which also serves as a candidate for~$G^+$. 
Indeed, the depicted graph also fulfills all properties necessary for our construction; {however, not all properties  of \Cref{clm:enhancing}.}
For example, the depicted graph is even $3$-connected. The proof of this is straightforward, but a bit tedious.
Thus, we leave it as an exercise to the interested reader to check that the graph remains connected even after the deletion of any two vertices or alternatively, that any pair of vertices is connected by three disjoint paths.

Later, the subgraph $G$ of $G^+$ that is induced by  all vertices of $\bigcup_i \Tint _i$  will be of particular interest; in \cref{fig:3-connected}, these vertices (and their convex hull) lie inside the \htriangle $\triangle$. Recall that $\Tint_i$ denote the part of the tunnel $\Tall_i$ obtained by deleting the first two and last two sections.

It is a well-known fact that all (straight-line or topological) planar drawings of a 3-connected planar graph on the sphere  are equivalent~\cite{imrich1975whitney}; for a definition of equivalent drawings consult  \cref{sec:definitions}.
Consequently, the result extends to \textit{essentially} $3$-connected graphs as it also holds for topological drawings. For later reference, we note the following.
\begin{claim}
    \label{clm:ExtendG}
The planar graph $G^+$ is essentially  $3$-connected. Therefore, all crossing-free drawings of~$G^+$ on a sphere are equivalent.
Furthermore, any straight-line drawing of $H$ in the plane can be extended to a straight-line drawing of $G^+$.
\end{claim}
We ensure that the neighborhood complex of $u$ is the underlying planar graph of $G^+$, i.e, for each edge of $G^+$ not present in $H$, we insert a triangle {formed by the vertices of this edge together with $u$} and call the resulting complex $\overline C$.
\medskip

\textbf{In the fifth and last step}, our final complex $C$ consist of two copies of $\overline C$ in which the apex vertices are identified. We will later use these two copies in order to guarantee that in any geometric embedding the apex lies {outside of all tunnels} for one copy of $\overline C$.
This finishes the construction of the complex~$C$.

 \paragraph{Time Complexity.}
In order to verify that the construction shows \ER-hardness,  we argue that it has a running  time that is polynomial in the size of the input.
To this end, note that a pseudoline arrangement with $n$ pseudolines can be  described by the order of the $O(n^2)$ crossings. Thus, the input size is $N = O(n^2)$.  After adding the \htriangle and $\ell_0$, the crossing diagram still has a size in $O(n^2)$. 
It is easy to see that our construction has a size proportional to $N^{3/2}$: 
For each segment and crossing of the diagram, we build a constant size construction involving the apex. {Moreover, we add a triangle for every (additional) edge in $G^+$; recall that $G^+$ has size $O(n^3)$.}
Consequently, the total construction has size $O(n^3) = O(N^{3/2})$.
{We remark, that a more careful choice of $G^+$, as in \cref{fig:3-connected}, yields a construction that is linear in $N$.}

\paragraph{Correctness.}

It remains to show that the pseudoline arrangement $L$ is stretchable if and only if~$C$ has a geometric embedding in $\mathbb R^3$.

\paragraph{Correctness I: Stretchability implies
Embedabbility.}
If $L$ is stretchable, it is relatively straight-foward to construct a geometric embedding of $C$.
\begin{claim}
If $L$ is stretchable, then $C$ has a geometric embedding.
\end{claim}
\begin{proof}
The construction of the geometric embedding goes along the same lines as the construction of~$C$.
Consider the stretched line arrangement~$L'$ equivalent to the pseudoline arrangement~$L$, we construct $L''$, by adding $\ell_0$ to the left of all crossings of $L'$.

    Next we show how to add the \htriangle~$\triangle$ such that the intersection pattern is as depicted in the crossing diagram.
    Note that we merely have to construct the triangle in a way that all intersections of $L''$ are contained in it and the vertices of the \htriangle lie in the correct faces.
    For an illustration of the following argument see \Cref{fig:insert-triangle}.
       \begin{figure}[b]
    \centering
    \includegraphics[page = 2]{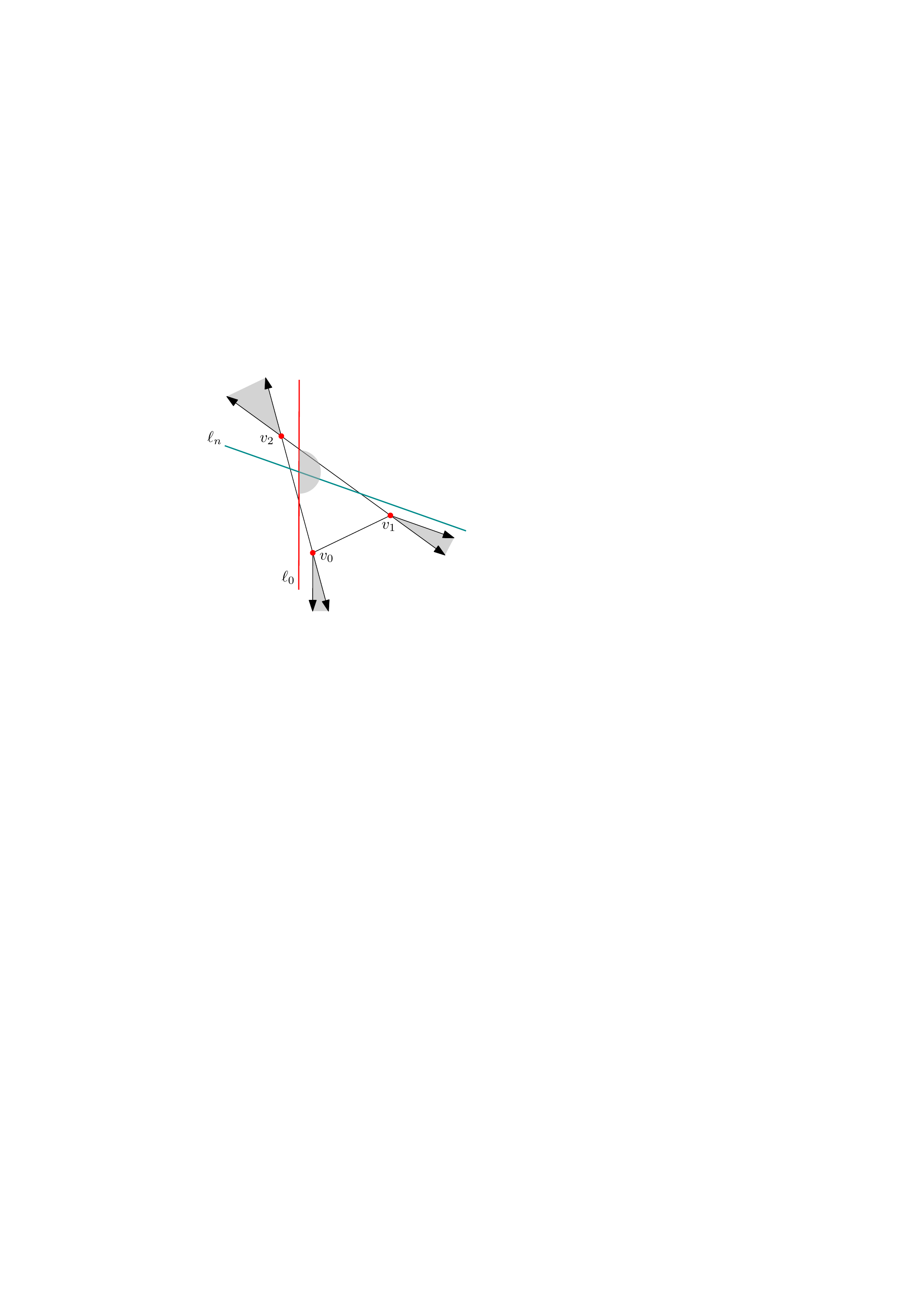}
    \caption{Illustration for the proof of \cref{clm:ExtendG} concerning the insertion of $\triangle$. By scaling $L''$ while keeping point $p$ and the \htriangle fixed, we can ensure that all intersections lie inside the \htriangle eventually.}
    \label{fig:insert-triangle}
    \end{figure}
  The \htriangle has three vertices~$v_0,v_1,v_2$.
By construction, the two vertices~$v_0,v_1$ are supposed to lie
in the bottom cell of~$L^*$ which is bounded by $\ell_0$ and $\ell_n$ and potentially other pseudolines.
    Similarly, $v_2$ is in the upper cell bounded exactly by $\ell_0$ and $\ell_n$. 
Let us fix some triangle $\triangle$ that contains the intersection point~$p$ of $\ell_0$ and $\ell_n$ with the properties above.
Consider the (gray) region that contains all intersections of~$L''$. 
By scaling $L''$ while fixing $p$ and the triangle $\triangle$, the gray region becomes arbitrarily small and is eventually contained in $\triangle$.
This shows that the \htriangle can be added as wished.

    Afterwards, we lift the lines and the \htriangle in $3$-space, construct the tunnels, and enhance $H$ to $G^+$. For the latter step, we use the fact that  we can extend any straight-line drawing of $H$ to a straight-line drawing of $G^+$ by \Cref{clm:ExtendG}.
    (In case that we want to use $G^+$ as in \Cref{fig:3-connected}, we need to use the fact that the drawing of $H$ comes from the stretched line arrangement and thus the faces are convex.)
    Afterwards, we insert  the apex $u$ (high enough) above, make a second copy by taking the mirror image, and identify the apices.
    This yields a geometric embedding of~$C$.
\end{proof}

\paragraph{Correctness II: Embedabbility implies Stretchability.}
The reverse direction is more involved. Let $\varphi$ denote a geometric embedding of~$C$. 
To show that $L$ is stretchable, we start with a collection of crucial properties.

By definition,  each tunnel forms a  closed topological sphere; all of which are pairwise disjoint.
By a generalization of the Jordan curve theorem, also known as the Jordan–Brouwer separation theorem, 
any topological embedding of a $(d-1)$-sphere in $\R^d$ splits the space into two components~\cite{JordanBrouwer}; we refer to the bounded component of a tunnel or any other topological sphere as its \emph{inside} and to the unbounded component as its \emph{outside}.
\begin{claim}\label{clm:apexAtInfinity}
   There exists a copy $\overline C$ in $C$ such that
    the apex $u$ lies outside all tunnels of $\varphi(\overline C)$.
\end{claim}
\begin{proof}
 Note that the apex $u$ lies inside at most one tunnel of $C$; otherwise, among all tunnels containing $u$, the innermost separates $u$ from the outermost. A contradiction to the fact that shares and edge with the vertices of all tunnels. Consequently, for at least one copy of $\overline C$ contained in $C$, the apex $u$ lies outside of each tunnel.
\end{proof}

From now on, we focus on the geometric embedding $\varphi(\overline C)$ of this $\overline C$ and do not make further use of the other copy.

\begin{claim}\label{clm:inTunnel}
    In $\varphi(\overline C)$,  the special edge $e_i$ lies inside its tunnel $\Tall_i$ for all $i$.
\end{claim}
\begin{proof}
 Consider the vertex $s_i$ of $e_i$ and the tunnel $\Tall_i$ forming a closed topological sphere incident to~$s_i$. 
 By construction, all roof edges of the first section of $\Tall_i$ are visible by  the apex $u$ and thus form triangles with $u$. Note that the edges non-incident to $s_i$ contain a cycle, see \cref{fig:ends2}.

\begin{figure}[htb]
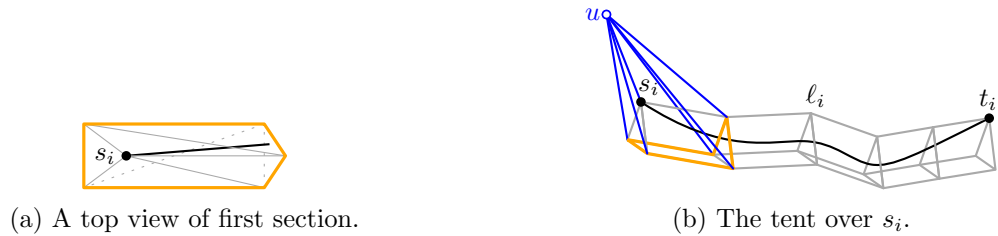

\centering
\begin{subfigure}[b]{0.45\textwidth}
\centering
\includegraphics[page = 19]{figures/Construction42.pdf}
\caption{A top view of first section.}
\label{fig:ends2}
\end{subfigure}
\hfil
\begin{subfigure}[b]{0.45\textwidth}
\centering
\includegraphics[page = 26]{figures/Construction42.pdf}
\caption{The tent over $s_i$.}
\label{fig:RedinTunnel}
\end{subfigure}
\caption{Illustration for the proof of \cref{clm:inTunnel}: the spheres surrounding $s_i$.
}
\label{fig:Construction2}
\end{figure}

 The triangles between the apex $u$ and this cycle  form a \emph{tent} on top of  $\Tall_i$, see \cref{fig:RedinTunnel}. In particular, the tent together with the tunnel roof of the first section form another sphere incident to $s_i$. 
 By \cref{clm:apexAtInfinity}, $u$ is outside the tunnel.
 If $e_i$ started towards the outside of the tunnel, then it would be \emph{trapped} inside the tent-sphere which is not incident to $t_i$. 
 Consequently, $e_i$ lies inside the tunnel. 
\end{proof}

Consider a special edge $e_i$ and the wedge $W_i$ defined by rays originating at $u$ through points in~$e_i$, see  \cref{fig:aboveBelow}. We say a point $p$ of $(W_i\setminus e_i)$ 
lies \emph{above} $e_i$ if the segment $up$ does not intersect~$e_i$; we write $p>e_i$. Similarly, $p$ lies \emph{below} $e$ if the segment $up$ does intersect  $e_i$; we write $p<e_i$.

\begin{figure}[htb]
\centering
\includegraphics[page = 20]{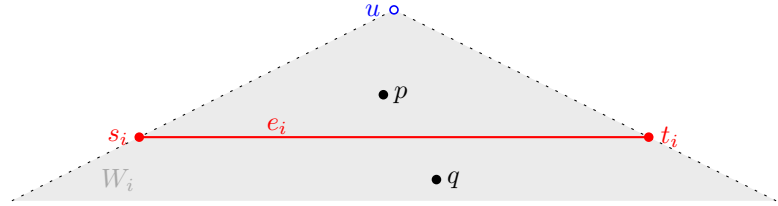}
\caption{The wedge $W_i$ of $e_i$ containing a point $p$ above $e_i$ and a point $q$ below $e_i$.}
\label{fig:aboveBelow}
\end{figure}

\begin{claim}\label{clm:aboveBelow}
Consider a special edge $e_i$.  For every two points $p,q$ on $W_i\setminus e_i$ that belong to a common triangle in $\Tall_i$, the above/below-relation is consistent, i.e.,  $p<e_i\iff q<e_i$.
\end{claim} 
\begin{proof}
Suppose for a contradiction that $p>e_i$ and $q<e_i$. By visibility of $s_i$ and $t_i$, $p$ lies neither on the segment $us_i$ nor on the segment $ut_i$. Hence,  the segment $pq$ intersects $e_i$ in an inner point, i.e., it intersects $e-\{s_i,t_i\}$, see \cref{fig:aboveBelow}. However, no triangle of $C$ contains $e_i$ and hence $e_i-\{s_i,t_i\}$ and $pq$ must have an empty intersection. A contradiction.
\end{proof}

Now, we consider a small sphere $S$ around the apex $u$ which has no other vertex inside, see \cref{fig:apexA}.
For each triangle in $\varphi(\overline C)$ containing $u$, the intersection with the sphere $S$ yields an arc of a great circle. Consequently, we obtain a crossing-free drawing $D^+$ of the essentially 3-connected planar graph $G^+$ (the neighborhood complex of $u$)
with arcs of great circles on $S$. 
Moreover, for each special edge $e_i$, we consider the projection $a_i$ of the (artificial) triangle $\{s_i,t_i,u\}$ onto the sphere $S$; note that $\{s_i,t_i,u\}$ is not a triangle of the complex, see \cref{fig:apexB}. 
\begin{figure}[b]
\centering
\begin{subfigure}[t]{0.3\textwidth}
\centering
\includegraphics[page = 24]{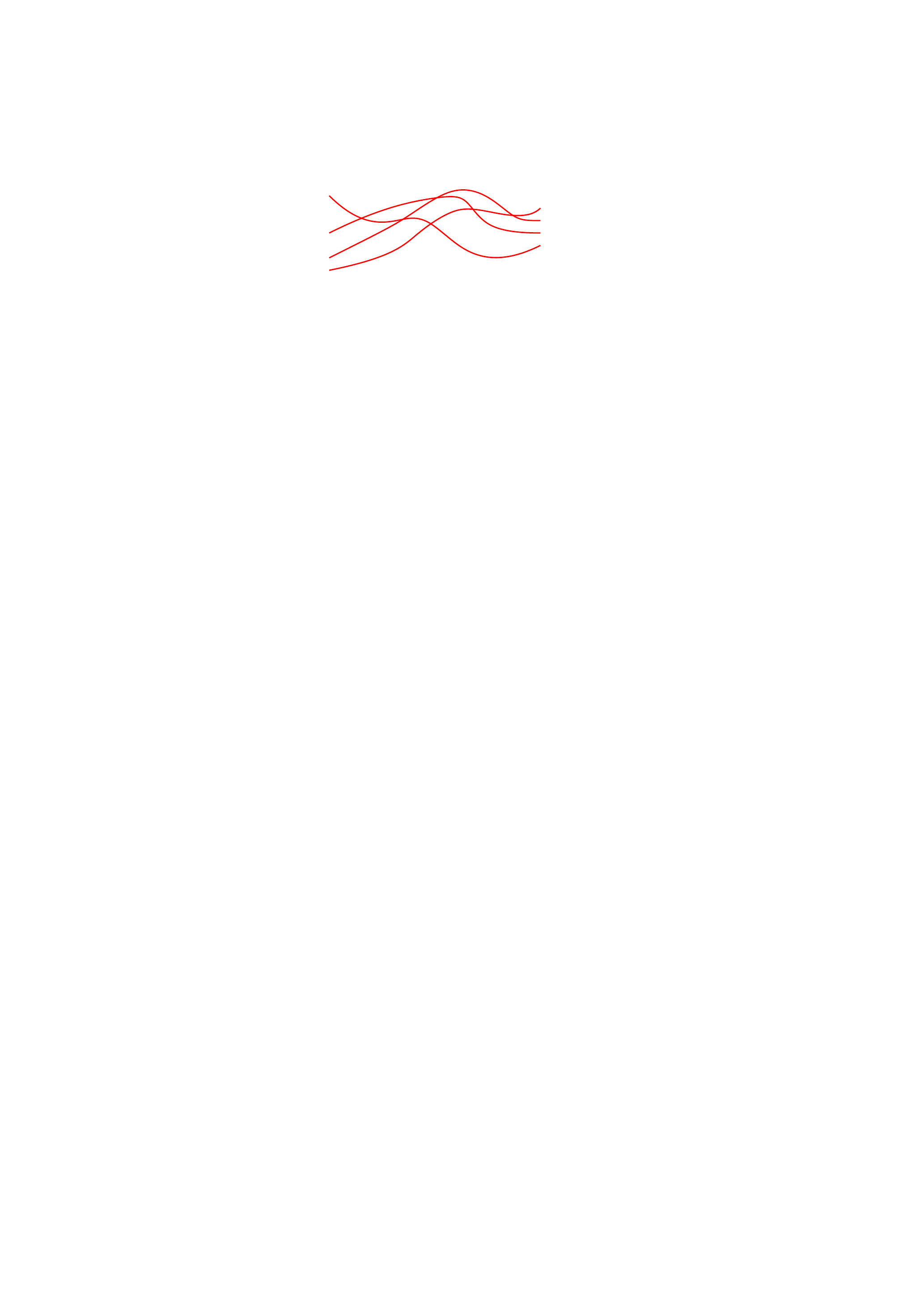}
\caption{}
\label{fig:apexA}
\end{subfigure}
\hfil
\begin{subfigure}[t]{0.3\textwidth}
\centering
\includegraphics[page = 22]{figures/3d-prove.pdf}
\caption{}
\label{fig:apexB}
\end{subfigure}
\hfil
\begin{subfigure}[t]{0.3\textwidth}
\centering
\includegraphics[page = 23]{figures/3d-prove.pdf}
\caption{}
\label{fig:apexC}
\end{subfigure}
\hfil
\caption{Obtaining a line arrangement from a geometric embedding.}
\label{fig:apex}
\end{figure}
This yields a set of arcs~$\A$.
In the remainder, we show that $\A$ has the same combinatorics/intersection pattern as the pseudoline arrangement~$L^*$, i.e., the order of intersections along each arc/pseudoline are the same. Afterwards, we project $\A$ to a plane as illustrated in \cref{fig:apexC} and obtain a line arrangement proving that $L^*$ (and thus $L$) is stretchable.

Let $D$ denote the restriction of the drawing $D^+$ to the graph $G$. By \cref{clm:ExtendG}, all crossing-free drawings of $G^+$ on the sphere are equivalent. This implies the following fact:
\begin{claim}\label{clm:interestingPart}
In $D^+$, the \htriangle separates the subdrawing $D$ from the vertices $\{s_j,t_j\mid j=0,\dots, n\}$. In particular, by convexity,  the convex hull of any vertex subset $U$ in $D$ is  contained in the \htriangle.
\end{claim}

For a tunnel, we call the intersection of two consecutive sections a \emph{\sloop}. Note that each \sloop consists of three edges, two of which belong to the roof of the tunnel and form a \emph{\rafter};  the remaining edge belongs to the bottom of the tunnel.

\begin{claim}\label{clm:tunnelRoof}\label{clm:order}
 $D\cup \A$ fulfills the following properties for all $i$:
\begin{enumerate}
\setlength\itemsep{0pt}
\item \label{item:i} The arc $a_i$ intersects each \rafter of $\Tint_i$ exactly one time. 
\item \label{item:ii} The intersection points appear in the correct order along $a_i$, i.e., they follow the natural order along the tunnel. 
\item \label{item:iii} The faces of a roof section in $\Tint _i$ cover an interval of $a_i$.
\end{enumerate}
\end{claim}

\begin{proof}
Consider a \sloop  $\loop$ of tunnel $\Tall_i$ and its (projected) vertices in $D$. 
We first show that the arc~$a_i$ intersects at least one (potentially artificial)
segment connecting the vertices of
$\loop$ in $D$.
To this end, {we may assume that $o$ is disjoint from $s_i$ and $t_i$ and} consider the restriction of $\varphi(\overline C)$ to the plane~$P_i$ containing $u$ and $e_i$.
{By \cref{clm:apexAtInfinity,clm:inTunnel}, the apex $u$ is outside of the tunnel~$\Tall_i$ while~$e_i$ is inside $\Tall_i$. Hence, there exist curves~$\gamma_a,\gamma_b\subset \Tall_i$  from $s_i$ to $t_i$ on $P_i$ that together enclose~$e_i$ and separate $e_i$ from $u$, see \cref{fig:LoopIntersection}.}

\begin{figure}[htb]
	\centering
		\includegraphics[page = 13]{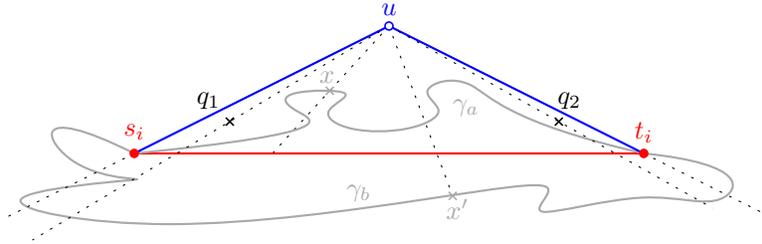}
		\caption{An illustration for the proof of \cref{clm:order}: the curve $\gamma_a$ in $P_i$ separates $e_i$ from $u$.}
		\label{fig:LoopIntersection}
\end{figure}

Because $s_i$ and $t_i$ are visible from $u$, the curves  intersect the segments $s_i u$ and $t_i u$ exactly in~$s_i$ and $t_i$, respectively. We denote the curve  contained in the triangle of~$u,s_i,t_i$ by~$\gamma_a$. The curve~$\gamma_a$  contains a point $x\in\loop$ because  $s_i$ and $t_i$ lie in different components of $\Tall_i-\loop$. Hence, the ray supporting $ux$ intersects $e_i$, i.e., $x>e_i$.

Next we show that  if  $\loop$ belong to $\Tint _i$, then the curve $\gamma_b$ intersects $\loop$ in a point below $e_i$.
Analogous to $\gamma_a$, the curve~$\gamma_b$ also contains a point $x'\in\loop$. 
It remains to show that $x'$ is below $e_i$.
Note that this does not hold, only in the case that $x'$ is to the left of the ray~$us_i$  or to the right of the ray~$ut_i$;  otherwise, $\gamma_b$ is below $e_i$.
By \Cref{clm:interestingPart}, the projection of the \htriangle separates $D$ from the 
vertices $s_i,t_i$ in $D^+$.
Consequently, the \htriangle $\triangle$ and $P_i$
intersect in two points $q_1$ and~$q_2$.
The points $q_1$ and $q_2$ must be above $e_i$,
because $u$ forms a triangle with every edge of $\triangle$.
By \cref{clm:interestingPart} and the fact that the vertices of \loop are contained in $D$, the vertices of \loop and their convex hull are inside $\triangle$ in $D$, {see \cref{fig:helper-triangle}.} Note that \loop is not completely contained in~$D$ because it contains an edge of the tunnel bottom.

\begin{figure}[htb]
	\centering
	\includegraphics[page=5]{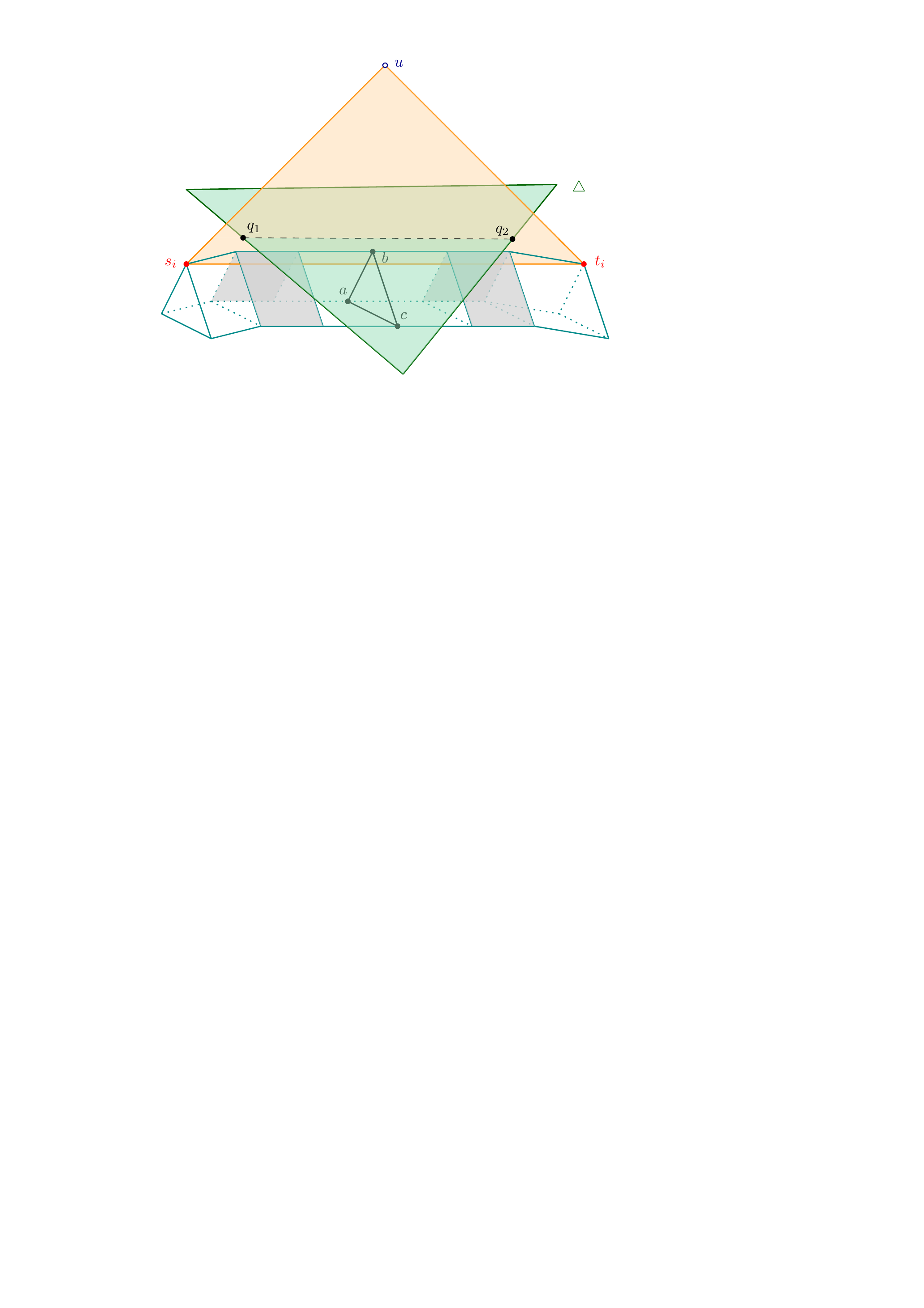}
	\caption{An illustration for the proof of \cref{clm:order}. The wedge $W_i$ and the \htriangle $\triangle$ (yellow) intersect in the segment $q_1q_2$.
	We drew the projection of $D$ onto the the \htriangle.
	In $D$, the vertices of the sloop $o$ (purple) are contained in $\triangle$. Therefore, 
	the point $x'$ lies between the rays $uq_1$ and $uq_2$ on $W_i$ and thus below $e_i$. 
	}
	\label{fig:helper-triangle}
\end{figure}
This implies that $x'$ is in a part of $\gamma_b$ 
that is bounded by the rays $uq_1$ and $uq_2$, see \cref{fig:LoopIntersection}.
Therefore, $x'$ lies below $e_i$ and is thus invisible from~$u$.
It follows that $x'$ belongs to an invisible edge of $\loop$, i.e., $x'$ is contained in the edge of $o$ belonging to the tunnel bottom; we denote it by $b_\loop$.
By \cref{clm:aboveBelow}, no point of  $b_\loop$ lies above~$e_i$. Consequently, $x>e_i$ belongs to the tunnel roof, i.e., the \rafter of $\loop$.

Moreover, no triangle of $\Tint_i$
incident to $b_\loop$ lies above $e_i$. 
Note that for each triangle in the bottom of $\Tint _i$, there exists a \sloop $\loop$ such that $b_\loop$ is incident to it. Hence, no point of the bottom of~$\Tint _i$  lies above $e_i$, i.e.,  we obtain the following \emph{Property 1}: the curve~$\gamma_a$ does not contain points of the bottom of the tunnel~$\Tint _i$.

It remains to argue that there exists no further intersection point {if $\loop\subset \Tint_i$}. 
We show that $(\gamma_a\cup\gamma_b)\cap \loop=\{x,x'\}$. Suppose that there exists a further point $x''\in (\gamma_a\cup\gamma_b)\cap \loop$.
Then {$\loop$ must be contained in $P_i$ and
we may assume that $x,x''$ are vertices of $\loop$. 
However, then the edge $xx''$ intersects the closed curve formed by concatenating the segments $us_i$, $s_it_i=e_i$, $t_iu$ -- a contradiction {to the properties of a geometric embedding}.
Consequently, $(\gamma_a\cup\gamma_b)\cap \loop=\{x,x'\}$.} In particular, we obtain \emph{Property~2}: $\gamma_a$ intersects each \rafter of $\Tint_i$ exactly once. This implies that  $a_i$ intersects {each rafter in $D$ exactly once}
and thus proves \ref{item:i}.
\medskip

Together, Properties 1 and 2 imply that $\gamma_a$ visits any section of $\Tint _i$ exactly once. Consequently, the part of $a_i$ covered {by the roof faces of} any section forms an interval on $a_i$. This proves~\ref{item:iii}.
\medskip

It remains to prove  \ref{item:ii}.
By Property 2, $\gamma_a$ intersects each \rafter exactly once. We denote the intersection points on the curve~$\gamma_a$ with the rafters by $p'_1,\dots,p'_{k}$ in the order they appear on $\gamma_a$,  {i.e., $p'_j$ denotes the intersection of $\gamma_a$ with the $j$-th rafter of  $\Tint_i$.}
 Let $p_j$ denote the projection of $p'_j$ onto~$S$, i.e., the intersection point of the $j$-th rafter with $a_i$
 in $D\cup\A$.
We show that $p_1,\ldots,p_k$ appear in this order along $a_i$ in $D\cup \{a_i\}$, i.e., $p_j$ lies before $p_{j+1}$ for all $j$.
To this end, we restrict our attention to the wedge $W_i\subset P_i$ bounded by the rays $uq_1$ and $uq_2$. For simplicity of the presentation, we transform $W_i$ such that (the interesting part of) $e_i$ is on the $x$-axis, $s_i$ left of $t_i$, and that the apex~$u$  at infinity above, see also \cref{fig:orderA}.

    \begin{figure}[htb]
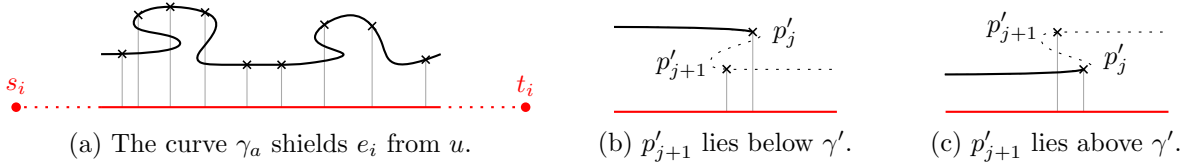

    \centering
    \begin{subfigure}[b]{0.45\textwidth}
    \centering
    \includegraphics[page = 10]{figures/Construction42.pdf}
    \caption{The curve $\gamma_a$ shields $e_i$ from $u$.}
    \label{fig:orderA}
    \end{subfigure}
    \hfil
     \begin{subfigure}[b]{0.25\textwidth}
    \centering
    \includegraphics[page = 11]{figures/Construction42.pdf}
    \caption{$p'_{j+1}$ lies below $\gamma'$.}
    \label{fig:orderB}
    \end{subfigure}
    \hfil
      \begin{subfigure}[b]{0.25\textwidth}
    \centering
    \includegraphics[page = 12]{figures/Construction42.pdf}
    \caption{$p'_{j+1}$ lies above $\gamma'$.}
    \label{fig:orderC}
    \end{subfigure}
    \caption{Illustration for the proof of \cref{clm:order}~\ref{item:ii}.}
    \label{fig:order}
    \end{figure}

    Suppose for the purpose of a contradiction that $p_{j+1}$ lies left of $p_j$, see also \cref{fig:orderB,fig:orderC}. 
    Note that the initial subcurve  of $\gamma_a$ until $p'_j$, denoted by $\gamma_a'$, shields the initial part of $e_i$ until $p_j$. 
    If $p'_{j+1}$ lies below $\gamma_a'$  as depicted in \cref{fig:orderB}, then $p'_{j+1}$ is not visible from $u$. 
    However, by construction, all \rafter{}s and thus all points $p'_1, \dots, p'_k$ are visible from the apex $u$. A contradiction. If $p'_{j+1}$ lies above $\gamma_a'$ as depicted in \cref{fig:orderC}, then $p'_{i+1}$ is not visible from $u$  because of the subcurve $\gamma_a-\gamma_a'$, a contradiction. Consequently, $p_j$ is left of $p_{j+1}$ for all $j$.
\end{proof}

In order to define a notion of \emph{left} and \emph{right} of $a_i$ in $D\cup \{a_i\}$, we enhance $D\cup \{a_i\}$ by the projection of $\triangle$. 
The arc $a_i$ partitions the interior of (the projection of) $\triangle$ into to two regions. 
Considering the orientation of $a_i$, it is easy to
determine if a point within one of these two regions lies \emph{left} or \emph{right}.

Every drawing on the sphere can be transformed to a drawing in the plane where any face can be chosen as the outer face. 
In order to relate notions such as clockwise and counter-clockwise with our drawing on the sphere, we fix the face $f_0$ as the outer face {of $D^+$ such that $v_0,v_1,v_2$ is a counter clockwise cycle. For $D$, these notions are then inherited because we consider a drawing $D$ that is an induced subdrawing of $D^+$}.

\begin{claim}\label{clm:LeftRightPaths}
In $D\cup \{a_i\}$, the vertices of the left and right roof paths of $\Tint _i$ are left and right of $a_i$. 
\end{claim}
\begin{proof}

Let $u_1,\dots, u_k$ denote the vertices of the left roof path,  $v_1,\dots, v_k$ the vertices of the central roof path, and  $w_1,\dots, w_k$ the vertices of the right roof path of $\Tint_i$ in $D\cup \{a_i\}$. 
The value of $k$ depends on the number of tunnel sections of $\Tint_i$, namely,  $k=2n-4+i$.
For each $j = 1,\ldots,k$,
it holds that
$u_j,v_j,w_j$
is a \rafter which $a_i$ intersects exactly once by \cref{clm:tunnelRoof}~\ref{item:i}. 
Consequently, if 
$u_j$
lies left of $a_i$ then 
$w_j$
lies right of $a_i$ and vice versa.

We consider a visible section of $\Tint_i$. To this end, let
$u_ju_{j+1}$
be a visible edge. 
We show that both 
$u_j$
and 
$u_{j+1}$
lie left of $a_i$ implying that 
{$w_j$ and $w_{j+1}$ lie right of $a_i$.}
Because each vertex of the left or right roof path belongs to some visible section, this property implies the claim.

First consider the case that $u_j$ and $u_{j+1}$ lie right of $a_i$. By \cref{clm:order}~\ref{item:ii}, the intersection $x_j$ of the \rafter $u_j,v_j,w_j$ with $a_i$ lies before the intersection $x_{j+1}$ of the \rafter $u_{j+1},v_{j+1},w_{j+1}$ with $a_i$. 
However,  the cycle $u_{j},u_{j+1},v_{j+1}, w_{j+1}, w_{j}, v_{j}$ is flipped, i.e., it has the outer face $f_0$ to its right, however it is supposed to lie to its left, see \cref{fig:leftRightA,fig:leftRightB}. A contradiction. 
    \begin{figure}[htb]
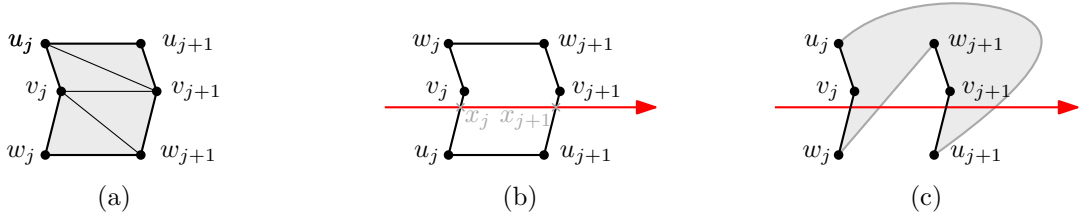

\centering
\begin{subfigure}[b]{0.3\textwidth}
\centering
\includegraphics[page = 14]{figures/Construction42.pdf}
\caption{}
\label{fig:leftRightA}
\end{subfigure}
\hfil
\begin{subfigure}[b]{0.3\textwidth}
\centering
\includegraphics[page = 15]{figures/Construction42.pdf}
\caption{}
\label{fig:leftRightB}
\end{subfigure}
\hfil
\begin{subfigure}[b]{0.3\textwidth}
\centering
\includegraphics[page = 16]{figures/Construction42.pdf}
\caption{}
\label{fig:leftRightC}
\end{subfigure}
\hfil

\caption{Illustration for the proof of \cref{clm:LeftRightPaths}.
}
\label{fig:leftRight}
\end{figure}

Hence it remains to consider the case in which $u_j$ and $u_{j+1}$ lie on different sides.
We consider the case that $u_j$ lies left and $u_{j+1}$ right of $a_i$ 
as illustrated in \cref{fig:leftRightC}. 
By crossing-freeness, exactly one of the edges $u_ju_{j+1}$ and $w_jw_{j+1}$ crosses $a_i$
between $x_j$ and $x_{j+1}$, while the other edge crosses before $x_j$ or after $x_{j+1}$ as depicted in \cref{fig:leftRightC}. 
However, then the roof faces of this section do not cover a consecutive part of $a_i$. A contradiction to  \cref{clm:tunnelRoof}~\ref{item:iii}.
\end{proof}

\medskip

\begin{claim}\label{clm:crossing}
   In $\A$, for all $i,j$ with $i<j$, 
    the arc $a_j$ crosses the arc $a_i$ from left to right.
\end{claim}
\begin{proof}
We restrict our attention to the vertices of the left and right roof paths of $\Tint _i$ and $\Tint _j$. 
Let $\alpha_1,\alpha_2,\alpha_3$ and  $\beta_1,\beta_2,\beta_3$ denote the vertices of two rafters of $\Tint _i$ such that they  are contained in the right and left roof path of $\Tint _j$, respectively. For an illustration, consider \cref{fig:crossingA}.
We call their induced rafters, the $\alpha$-\rafter and the $\beta$-\rafter.
By \cref{clm:LeftRightPaths}, the vertices of a left/right roof path lie on the left/right of 
its corresponding arc.
 \begin{figure}[htb]
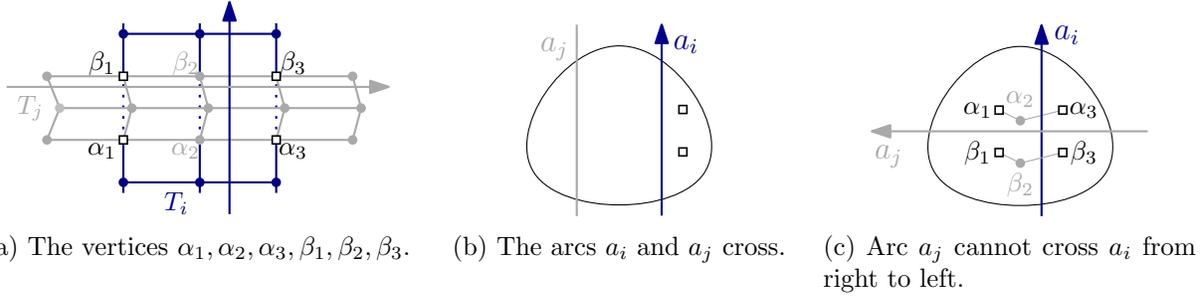

\centering
\begin{subfigure}[t]{0.35\textwidth}
\centering
\includegraphics[page = 21]{figures/Construction42.pdf}
\caption{The vertices $\alpha_1,\alpha_2, \alpha_3,\beta_1,\beta_2,\beta_3$.}
\label{fig:crossingA}
\end{subfigure}
\hfil
\begin{subfigure}[t]{0.3\textwidth}
\centering
\includegraphics[page = 22]{figures/Construction42.pdf}
\caption{The arcs $a_i$ and $a_j$ cross.}
\label{fig:crossingB}
\end{subfigure}
\hfil
\begin{subfigure}[t]{0.3\textwidth}
\centering
\includegraphics[page = 23]{figures/Construction42.pdf}
\caption{Arc $a_j$ cannot cross $a_i$ from right to left.}
\label{fig:crossingC}
\end{subfigure}
\hfil
\caption{Illustration for the proof of \cref{clm:crossing}.}
\label{fig:crossing}
\end{figure}    

Suppose the arcs $a_i$ and $a_j$ do not cross. 
By symmetry, we may assume that $a_j$ lies left of $a_i$, see \cref{fig:crossingB}. 
Recall that the vertices $\alpha_3$ and $\beta_3$ lie to the right side of $a_i$, but on different sides of $a_j$.
A contradiction.

Now, suppose $a_j$ crosses $a_i$ from right to left as in \cref{fig:crossingC}. Then each vertex among $\alpha_1,\alpha_3,\beta_1,\beta_3$ lies in one of the four regions defined by $a_i$ and $a_j$. Moreover $a_j$ separates the $\alpha$-\rafter of $\Tint _i$ containing $\alpha_1,\alpha_2$, and $\alpha_3$ from the $\beta$-\rafter containing $\beta_1$,$\beta_2$, and $\beta_3$. This implies that $a_i$ intersects  the $\beta$-\rafter   before the $\alpha$-\rafter. A contradiction to \cref{clm:order}~\ref{item:ii}.
\end{proof}

\begin{claim}\label{clm:combinatorics}
In $\A$, the order of intersections on each arc $a_i$ is the same as for $\ell_i$ in $L^*$.
\end{claim}
\begin{proof}
We consider any three arcs $a_i,a_j,a_k$ with $i<j<k$.
By \cref{clm:crossing}, the vertices $s_i$,$s_j$,$s_k$,$t_i$,$t_j$, and $t_k$ lie on the correct sides for all three arcs. 
It is therefore the case that
 the order of intersections is correct on one arc of a triple if and only if it is correct on all three arcs.
Suppose for a contradiction that order of intersections on $a_i$ is not correct, i.e., the intersection of $a_j$ and $a_k$ lies on the wrong side of $a_i$. By symmetry, we consider the case that the intersection is left of~$a_i$ while it is supposed to lie right of~$a_i$, see \cref{fig:combinatorics}. 

 \begin{figure}[htb]
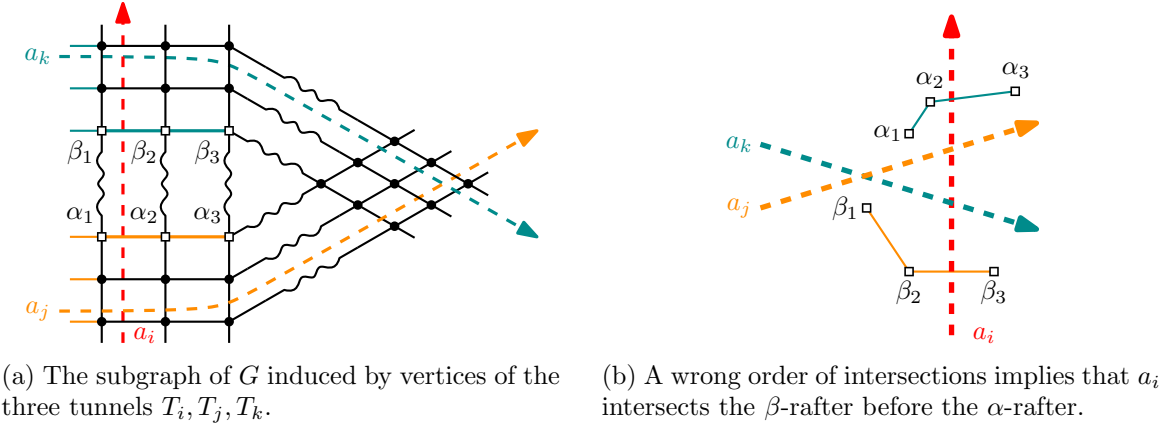

\centering
\begin{subfigure}[t]{0.45\textwidth}
\centering
\includegraphics[page = 24]{figures/Construction42.pdf}
\caption{The  subgraph of $G$ induced by vertices of the three tunnels $\Tint _i,\Tint _j,\Tint _k$.}
\label{fig:combinatoricsA}
\end{subfigure}\hfil
\begin{subfigure}[t]{0.45\textwidth}
\centering
\includegraphics[page = 25]{figures/Construction42.pdf}
\caption{A wrong order of intersections implies that~$a_i$ intersects the $\beta$-rafter before the $\alpha$-rafter.}
\label{fig:combinatoricsB}
\end{subfigure}
\caption{Illustration for the proof of \cref{clm:combinatorics}.}
\label{fig:combinatorics}
\end{figure}

{Let $\alpha_1,\alpha_2,\alpha_3$ and  $\beta_1,\beta_2,\beta_3$ denote the vertices of two rafters of $\Tint _i$ such that they  are contained in the left roof path of $\Tint _j$ and in the right roof path of $\Tint _k$, respectively. For an illustration, consider \cref{fig:combinatoricsA}.} Consequently, by \cref{clm:LeftRightPaths},  $\alpha_1,\alpha_2,\alpha_3$ lie left of $a_j$ and $\beta_1,\beta_2,\beta_3$ lie right of $a_k$. This implies that the intersection of the $\beta$-\rafter with $a_i$ lies before the intersection of the $\alpha$-\rafter with~$a_i$. A contradiction to \cref{clm:tunnelRoof}~\ref{item:ii}.
\end{proof}

\begin{claim}
    In $\A$, all intersections lie in {one hemisphere of $S$ bounded} by $a_0$.
\end{claim}
\begin{proof}
Consider the plane $P$ through the arc $a_0$ and apex $u$ and assume without loss of generality that $P$ is horizontal. By construction and \cref{clm:combinatorics}, the first crossing on  each arc $a_i$ is with the arc $a_0$; all other crossings come afterwards. 
Moreover, \cref{clm:crossing} ensures that each arc $a_i$ crosses $a_0$ from left to right. 
Consequently, all arcs start in one hemisphere of $S$ cut by $P$; let us suppose it is the top hemisphere. Then, the intersection of any two arcs $a_i$ and $a_j$ lie on the bottom hemisphere; otherwise the arcs contain both intersections of the supporting great circles.
\end{proof}

Let $P'$ denote a plane obtained by shifting the plane $P$ through $a_0$ and $u$ to the bottom, see also \cref{fig:apexC}.
Then, we 'project' the arc arrangement onto  the plane $P'$ by considering the intersection of $P'$ with the plane supporting the arc $a_i$. Clearly, this intersection yields a line. Hence, we obtain a line arrangement in the plane $P'$ that inherits the combinatorics of $\A$ and thus, proves the stretchability of $L$. 
This finishes the proof of \Cref{thm:TwoThree}.
\end{proof}


\paragraph{Fattening the Complex.}
The constructed 2-complex in the proof of \cref{thm:TwoThree} was not pure. 
Specifically, the special edges are not contained in any triangle. 
We can obtain a pure 2-complex by adding one new vertex to each special edge such that it forms a special triangle. 
Similarly, we can add a private vertex to each triangle to form a pure 3-complex~$C'$. 
Given a geometric embedding of $C$, these new vertices can easily be added close enough to their defining set in $C$. 
Hence, $C$ has a geometric embedding if and only if $C'$ has a geometric embedding in~$\R^3$. 
Together with \cref{thm:TwoThree}, this very small modification shows hardness of \GEM{3}{3}.

 \begin{lemma}\label{lem:threethree}
 The decision problem \GEM{3}{3} is \ER-hard.
 \end{lemma}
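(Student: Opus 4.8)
The plan is to reduce from \GEM{2}{3}, which is \ER-hard by \cref{thm:TwoThree}. Given the 2-complex $C$ produced by that reduction, I would build a 3-complex $C'$ by a minimal ``fattening'': for every special edge $e_i=\{s_i,t_i\}$ of $C$ (these are precisely the maximal faces of $C$ that are not already contained in a triangle), introduce a fresh private vertex $r_i$ and add the triangle $\{s_i,t_i,r_i\}$ together with its face $\{s_i,r_i\},\{t_i,r_i\}$; likewise, for every triangle $\tau$ of $C$, introduce a fresh private vertex $q_\tau$ and add the tetrahedron $\tau\cup\{q_\tau\}$ together with all of its faces. Since each new vertex belongs to exactly one maximal face, the resulting complex $C'$ is a pure 3-complex, and its construction from $C$ (hence from the pseudoline arrangement) is clearly polynomial-time. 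The claim to prove is that $C$ has a geometric embedding in $\R^3$ if and only if $C'$ does.

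The direction ($C'$ embeds $\Rightarrow$ $C$ embeds) is immediate: restricting a geometric embedding $\varphi'$ of $C'$ to the vertex set of $C$ yields a map satisfying the two defining conditions of a geometric embedding, because every face of $C$ is a face of $C'$ and the intersection conditions for pairs of faces of $C$ are inherited verbatim. For the converse, suppose $\varphi$ is a geometric embedding of $C$ in $\R^3$. I would place each new private vertex sufficiently close to the barycenter of the simplex it was attached to. Concretely, pick for $r_i$ a point in the relative interior of a tiny ball around the midpoint of the segment $\overline\varphi(e_i)$, chosen off the line through $\varphi(s_i)$ and $\varphi(t_i)$ so that $\{s_i,t_i,r_i\}$ spans a genuine triangle; pick $q_\tau$ in the relative interior of a tiny ball around the barycenter of the triangle $\overline\varphi(\tau)$, chosen off the plane of that triangle so that $\tau\cup\{q_\tau\}$ spans a genuine tetrahedron. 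The simplices $\overline\varphi(\sigma)$ over $\sigma\in C$ form a finite simplicial complex, so there is a positive lower bound on the distance between any two of its disjoint faces; choosing the radii of the perturbation balls below a suitable fraction of this bound guarantees that each new simplex meets the rest of the complex only in the face it was glued along, and two new simplices attached to disjoint host faces stay disjoint. Thus condition (ii) of a geometric embedding holds for all pairs, and $\varphi'$ is a valid geometric embedding of $C'$.

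The only mildly delicate point — and the one I would write out carefully — is verifying the intersection condition for a new simplex against the host faces that are \emph{not} disjoint from the face it was attached to (its sub- and super-faces in the old complex), rather than against disjoint ones. For a tetrahedron $\tau\cup\{q_\tau\}$ and a face $\rho\subseteq\tau$ of $C$, we need $\overline{\varphi'}(\tau\cup\{q_\tau\})\cap\overline\varphi(\rho)=\overline\varphi(\rho)$; this holds because $\overline\varphi(\rho)\subseteq\overline\varphi(\tau)\subseteq\overline{\varphi'}(\tau\cup\{q_\tau\})$, the last inclusion being valid for \emph{any} choice of $q_\tau$ since the tetrahedron contains its base facet. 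Against a face $\rho$ with $\rho\cap\tau\subsetneq\tau$ one reduces to the disjoint-face estimate above after intersecting with the affine span. The analogous bookkeeping for the special triangles $\{s_i,t_i,r_i\}$ is even simpler since $e_i$ was maximal in $C$. Assembling these cases gives $C'$ as a geometric embedding, and combined with \cref{thm:TwoThree} this proves \GEM{3}{3} is \ER-hard.
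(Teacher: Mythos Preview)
Your approach is essentially the same as the paper's: fatten the 2-complex $C$ from \cref{thm:TwoThree} by attaching a private vertex to each maximal face, and argue that the resulting 3-complex embeds in $\R^3$ if and only if $C$ does. One small inaccuracy: your $C'$ is \emph{not} pure, since the special triangles $\{s_i,t_i,r_i\}$ are maximal 2-faces not contained in any tetrahedron (the paper first passes to a pure 2-complex and then attaches a private vertex to \emph{every} triangle, including the new special ones); this does not affect the argument, as \GEM{3}{3} only requires a 3-complex, not a pure one.
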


\newpage
\subsection{Dimension Reduction}
\label{sec:dim-reduction}

In order to show hardness for all remaining cases of \Cref{thm:main}, 
we establish the following dimension reduction. 
For dimension reductions in the context of PL embeddings, we refer to \cite{parsa2018links, skopenkov2019short, undecidable2020, parsa2020}.

\begin{lemma}\label{lem:OneUp}
The decision problem \GEM{k}{d} reduces to \GEM{k+1}{d+1}.
\end{lemma}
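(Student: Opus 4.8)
The idea is to take a $k$-complex $C=(V,E)$ and form the "double cone" $C^+$ over $C$ with two new apex vertices $a,b$: set $V^+ := V \cup \{a,b\}$ and $E^+ := E \cup \{\, e\cup\{a\} : e\in E\cup\{\emptyset\}\,\} \cup \{\, e\cup\{b\} : e\in E\cup\{\emptyset\}\,\}$, so that $C^+$ is a $(k+1)$-complex. I claim $C$ embeds geometrically in $\R^d$ iff $C^+$ embeds geometrically in $\R^{d+1}$. This construction is clearly polynomial-time computable, so it suffices to prove the equivalence.

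\textbf{Forward direction.} Given a geometric embedding $\varphi\colon V\to\R^d$ of $C$, identify $\R^d$ with the hyperplane $\{x_{d+1}=0\}\subset\R^{d+1}$, and place $a,b$ at $(0,\dots,0,1)$ and $(0,\dots,0,-1)$ respectively (after first translating $\varphi(V)$ so that the origin of $\R^d$ lies in the relative interior of $\conv(\varphi(V))$, or more simply just checking general position directly). I would verify: (i) for each $e\in E$, the sets $\overline{\varphi^+}(e\cup\{a\})$ and $\overline{\varphi^+}(e\cup\{b\})$ are genuine simplices of the right dimension — this holds because $a$ (resp. $b$) is off the hyperplane containing $\varphi(V)$; (ii) the intersection condition $\overline{\varphi^+}(f)\cap\overline{\varphi^+}(f') = \overline{\varphi^+}(f\cap f')$ for all $f,f'\in E^+$. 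The last point splits into cases according to whether $f,f'$ contain $a$, $b$, both, or neither; the key geometric facts are that the open upper halfspace $\{x_{d+1}>0\}$ contains $a$ but is disjoint from $\conv(\varphi(V))\cup\{b\}$ (and symmetrically below), and that within the hyperplane the intersections are governed by $\varphi$ being an embedding of $C$. A cone $\conv(\overline{\varphi}(e)\cup\{a\})$ meets the hyperplane exactly in $\overline{\varphi}(e)$, which makes the cross-cases routine.

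\textbf{Reverse direction.} This is the main obstacle. Given a geometric embedding $\psi\colon V^+\to\R^{d+1}$ of $C^+$, I need to recover an embedding of $C$ in $\R^d$. The plan: since $C^+$ contains every simplex $e\cup\{a\}$ and $e\cup\{b\}$ for $e\in E$, and also the edge $\{a,b\}$, the points $\psi(a),\psi(b)$ together with $\psi(V)$ satisfy strong position constraints. The natural move is to project from the line through $\psi(a)$ and $\psi(b)$: more precisely, consider the central projection, or pick a hyperplane $\Pi\cong\R^d$ in "general position" and project $\psi(V)$ into $\Pi$ along... — but one must be careful, since a naive linear projection need not preserve the embedding property (faces could collapse or spurious intersections could appear). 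The cleaner route is: the segment $\psi(a)\psi(b)$ is disjoint from $\overline\psi(e)$ for every nonempty $e\in E$ (because $\overline\psi(e\cup\{a\})\cap\overline\psi(e\cup\{b\}) = \overline\psi(e)$ as $C^+$ is embedded, and if $\psi(a)\psi(b)$ met $\overline\psi(e)$ it would force a degeneracy: $\psi(a),\psi(b)$ and the points of $\psi(e)$ would be affinely dependent, contradicting that $\overline\psi(e\cup\{a,b\})$... — actually $\{a,b\}\cup e$ need not be in $E^+$ unless we add it; I would add $e\cup\{a,b\}$ to $E^+$ as well, i.e. make $C^+$ the full join $C * \{a,b\}$, which only helps). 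With the join structure, $\psi(a),\psi(b),\psi(V)$ are in sufficiently general position that projecting $\R^{d+1}\to\R^d$ along the direction $\psi(b)-\psi(a)$ sends $\psi(V)$ to a point set $\varphi(V)$ which is a geometric embedding of $C$: each simplex $\overline\psi(e)$ maps bijectively onto a simplex (no collapse, since $e\cup\{a\}\in E^+$ means $\psi(e)$ spans a subspace complementary to the projection direction), and two simplices $\overline{\varphi}(e),\overline{\varphi}(e')$ meet only where they should (a spurious intersection downstairs would lift to an intersection of the prisms $\conv(\overline\psi(e)\cup(\overline\psi(e)+\!v))$, and using the simplices $e\cup\{a\}$, $e\cup\{b\}$, $e'\cup\{a\}$, $e'\cup\{b\}$ already present in $C^+$ one derives a contradiction with $\psi$ being an embedding). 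I expect the bulk of the work to be this last verification — showing the projection introduces no spurious intersections — which is exactly the place where we must exploit that $C^+$ contains the cones over $C$ from \emph{both} $a$ and $b$, not just one; a single apex would not suffice to rule out faces "seeing through" each other after projection.
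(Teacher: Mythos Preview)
Your construction and forward direction match the paper's, and projecting along $\psi(b)-\psi(a)$ is the right move for the reverse direction. But two steps are broken. First, you cannot add the simplices $e\cup\{a,b\}$: if $|e|=k+1$ then $|e\cup\{a,b\}|=k+3$, so $C^+$ becomes a $(k+2)$-complex and the reduction no longer lands in $\GEM{k+1}{d+1}$. (Your $E^+$ as first written does not even contain the edge $\{a,b\}$, contrary to what you assert.) The double cone with \emph{no} simplex containing both $a$ and $b$ is exactly what the paper uses, and it suffices. Second, your ``no collapse'' justification is wrong: $e\cup\{a\}\in E^+$ only gives $\psi(a)\notin\operatorname{aff}(\psi(e))$, which does not prevent $\psi(b)-\psi(a)$ from being parallel to $\operatorname{aff}(\psi(e))$; take $\psi(e)$ on the $x$-axis in $\R^2$, $\psi(a)=(0,1)$, $\psi(b)=(1,1)$.

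The paper sidesteps both your prism sketch and any need for simplices containing $a$ and $b$ by proving directly that the projection $f$ along $\psi(a)-\psi(b)$ is \emph{injective} on $\bigcup_{e\in E}\overline\psi(e)$; this handles collapse and spurious intersection in one stroke. If $p\in\overline\psi(e_p)$ and $q\in\overline\psi(e_q)$ satisfy $f(p)=f(q)$ with $p\neq q$, then $p,q,\psi(a),\psi(b)$ lie in a common $2$-plane, and (after ordering $p,q$ suitably) the open segments $\psi(a)q$ and $\psi(b)p$ meet at some point $x$. Since $\psi$ is an embedding,
\[
x\in\overline\psi(e_q\cup\{a\})\cap\overline\psi(e_p\cup\{b\})=\overline\psi\bigl((e_q\cup\{a\})\cap(e_p\cup\{b\})\bigr)=\overline\psi(e_q\cap e_p)\subseteq\overline\psi(e_q);
\]
but $x$ also lies on the open segment from the vertex $\psi(a)$ to a point $q$ of the opposite face $\overline\psi(e_q)$ of the simplex $\overline\psi(e_q\cup\{a\})$, hence $x\notin\overline\psi(e_q)$ --- a contradiction. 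Both apices are used here, but only one per simplex, which is why the bare double cone is enough.
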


The idea is to add two apices to a
$k$-complex~$C$ in order to obtain a $(k+1)$-complex $C^+$.
We will then argue that $C$ has a geometric embedding in $\R^d$ if and only if $C^+$ has a geometric embedding in $\R^{d+1}$.
More formally, for a complex $C=(V,E)$  and a disjoint vertex set $U$,  $C*U$ denotes the \textit{join} complex $(V\cup U, E')$ where $E':=\{e\cup u\mid e\in E, u \in U\}$.
The following claim  immediately implies \cref{lem:OneUp}.

\begin{claim}\label{lem:dimensionReduction}
Let $C=(V,E)$ be a complex, $a,b\notin V$ two new vertices, and $C^+:=C*\{a,b\}$ their join complex.
Then $C$ has a geometric embedding in $\R^d$ if and only if $C^+$ has a geometric embedding in $\R^{d+1}$.
\end{claim}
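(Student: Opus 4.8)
The plan is to prove the two directions separately, with the forward direction being essentially trivial and the backward direction requiring a projection argument.

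\textbf{Forward direction ($C$ embeds in $\R^d$ $\Rightarrow$ $C^+$ embeds in $\R^{d+1}$).}
Given a geometric embedding $\varphi\colon V\to\R^d$, I would view $\R^d$ as the hyperplane $\{x_{d+1}=0\}\subset\R^{d+1}$ and place the two new apices at $a\mapsto(0,\dots,0,1)$ and $b\mapsto(0,\dots,0,-1)$. Define $\varphi^+$ to agree with $\varphi$ on $V$ and to send $a,b$ to these points. I would then verify the two conditions of a geometric embedding. For condition (i): each $e\in E$ already spans a $(|e|-1)$-simplex in the hyperplane, and adjoining $a$ (or $b$), which lies off the hyperplane, raises the dimension by exactly one, so $e\cup\{a\}$ spans a $|e|$-simplex; no set of $E'$ contains both $a$ and $b$. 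For condition (ii): I would check that $\overline{\varphi^+}(e_1\cup\{a\})\cap\overline{\varphi^+}(e_2\cup\{a\}) = \overline{\varphi^+}((e_1\cap e_2)\cup\{a\})$, which follows from the fact that taking a cone with apex $a$ commutes with intersection when everything relevant lies in the hyperplane and $a$ is the common apex; the mixed case ($a$ on one side, $b$ on the other) reduces to $\overline\varphi(e_1)\cap\overline\varphi(e_2)=\overline\varphi(e_1\cap e_2)$ in the hyperplane since the $x_{d+1}$-coordinate forces the intersection into $\{x_{d+1}=0\}$. This is all routine once the setup is fixed.

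\textbf{Backward direction ($C^+$ embeds in $\R^{d+1}$ $\Rightarrow$ $C$ embeds in $\R^d$).}
This is the substantive direction. Given a geometric embedding $\psi$ of $C^+$ in $\R^{d+1}$, let $p=\psi(a)$ and $q=\psi(b)$, and let $H$ be a hyperplane separating $p$ from $q$ (for instance, a hyperplane orthogonal to and meeting the open segment $pq$; I should check the segment $pq$ is not itself an edge of $C^+$ and that $p\neq q$, which holds since $\{a,b\}\notin E'$ but singletons are, and a hyperplane can be chosen avoiding all vertices other than by genericity). The idea is to intersect the star of $a$ with $H$: for each $e\in E$, the simplex $\overline\psi(e\cup\{a\})$ meets $H$ in a simplex of dimension $|e|-1$ — more carefully, I would define $\varphi(v)$ for $v\in V$ as the point where the segment from $\psi(v)$ to $p$ crosses $H$; equivalently, $\varphi$ is the central projection from $p$ onto $H$, restricted to $\psi(V)$. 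The key point is that this central projection, when restricted to the subcomplex $\bigcup_{e\in E}\overline\psi(e\cup\{a\})$ (the star of $a$), is injective: two points of distinct simplices in the star project to the same point of $H$ only if they are collinear with $p$, but then the segment joining them lies in a ray from $p$ and one can derive a violation of condition (ii) of the embedding of $C^+$ (the two simplices $\overline\psi(e_1\cup\{a\})$ and $\overline\psi(e_2\cup\{a\})$ would intersect outside $\overline\psi((e_1\cap e_2)\cup\{a\})$). From injectivity of the projection on the star, one reads off that $\varphi$ satisfies condition (i) — the images $\varphi(e)$ have the right dimension because $\overline\psi(e\cup\{a\})\cap H$ is a genuine $(|e|-1)$-simplex (here I use that $p\notin H$ and that $\overline\psi(e\cup\{a\})$ genuinely has dimension $|e|$) — and condition (ii): $\overline\varphi(e_1)\cap\overline\varphi(e_2)$ equals the projection of $\overline\psi(e_1\cup\{a\})\cap\overline\psi(e_2\cup\{a\})\cap H = \overline\psi((e_1\cap e_2)\cup\{a\})\cap H$, which is exactly $\overline\varphi(e_1\cap e_2)$.

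\textbf{Main obstacle.}
The delicate step is proving that central projection from $p$ is injective on the star of $a$. Naively one might project $\psi(V)$ from $p$ and hope for the best, but two vertices could be collinear with $p$ even though the embedding of $C^+$ is valid — the reason this cannot cause a collision in the relevant range is precisely that $a$ is joined to everything, so every simplex we care about is a cone with apex $p$, and a ray from $p$ meets such a cone in a segment (or point); the role of $b$ is only to guarantee, via the separating hyperplane, that this segment meets $H$ in exactly one point and that the projection is well-defined and finite. I would make this precise by arguing directly on the simplices: if $\varphi(x)=\varphi(y)$ for $x\in\overline\psi(e_1\cup\{a\})\cap H$, $y\in\overline\psi(e_2\cup\{a\})\cap H$, then $x=y$ as points of $H$, and tracing back, the point $x$ lies in $\overline\psi(e_1\cup\{a\})\cap\overline\psi(e_2\cup\{a\}) = \overline\psi((e_1\cap e_2)\cup\{a\})$, so $x\in\overline\varphi(e_1\cap e_2)$ and there is no contradiction — rather, this *is* the proof of condition (ii), and injectivity on each pair is automatic. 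So in fact the "obstacle" dissolves: the cone structure makes everything work, and the only genuine care needed is the genericity of the separating hyperplane $H$ (chosen to avoid all vertex images and to meet the interior of segment $pq$) so that all the intersections with $H$ have the expected dimensions.
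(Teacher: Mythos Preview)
Your forward direction is correct and matches the paper's. The backward direction, however, has a real gap.

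You claim that the role of $b$ is merely to supply a hyperplane $H$ separating $p=\psi(a)$ from $q=\psi(b)$, and that then ``the segment from $\psi(v)$ to $p$ crosses $H$'' for every $v\in V$. This does not follow: nothing forces the points $\psi(v)$ to lie on the $q$-side of $H$. Concretely, let $C$ be a $4$-cycle and embed $C^+$ in $\R^3$ by placing $\psi(v_1),\dots,\psi(v_4)$ at the corners of a square in the plane $\{x_3=0\}$, $\psi(a)$ at the center of that square, and $\psi(b)=(0,0,1)$. One checks directly that this is a valid geometric embedding of $C^+$, yet the entire star of $a$ lies in $\{x_3=0\}$, so every hyperplane separating $\psi(a)$ from $\psi(b)$ misses it and your map $\varphi$ is undefined. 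Worse, $v_1$ and $v_3$ here lie on \emph{opposite} rays from $\psi(a)$, so even the line-based central projection collapses them; your injectivity sketch silently assumed the two preimages lie on the same ray from $p$, which is exactly what fails when $\psi(a)$ sits inside the convex hull of $\psi(V)$.

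The paper uses $b$ in an essential way, not just to pick a hyperplane. After rotating so that $\psi(a)-\psi(b)$ is parallel to the $(d{+}1)$-st axis, it takes the \emph{orthogonal} projection to the first $d$ coordinates. If two distinct points $p',q'\in\overline\psi(C)$ had the same image, say with $p'_{d+1}>q'_{d+1}$, then in the $2$-plane spanned by $\psi(a),\psi(b),p'$ (which also contains $q'$) the open segments $\psi(a)\,q'$ and $\psi(b)\,p'$ must cross; the crossing point then lies in $\overline\psi(e_{q'}\cup\{a\})\cap\overline\psi(e_{p'}\cup\{b\})=\overline\psi(e_{q'}\cap e_{p'})\subseteq\overline\psi(e_{q'})$, contradicting that it also lies on an open segment from $\psi(a)$ to a point of $\overline\psi(e_{q'})$. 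So both apices are genuinely needed to force injectivity of the projection --- your argument, which uses only the cone over $a$, cannot be completed as written.
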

\begin{proof}
Let $\varphi$ be a geometric embedding of $C$ in  $\R^d$. Then, we define for $v\in V\cup \{a,b\}$,
\[\varphi'(v)=
\begin{cases}
(\ \ \varphi(v)\ \ ,\ 0\ )& \text{ if } v\in V,\\
(0,\dots, 0, +1)& \text{ if } v=a,\\
(0,\dots, 0,-1)& \text{ if } v=b.
\end{cases}\]
It is easy to check that $\varphi'$ is a geometric embedding of $C^+$ in $\R^{d+1}$: While $a$ and $b$  are well separated in the last coordinate, all other potential intersections happen in the $d$-dimensional subspace induced by the first $d$ coordinates. Hence $\varphi$ implies the correctness of the geometric embedding.
\smallskip

For the reverse direction, consider a geometric embedding $\varphi$ of $C^+$ in $\R^{d+1}$. 
Let $\varphi_a:=\varphi(a)$ and $\varphi_b:=\varphi(b)$.
Without loss of generality, we assume that $\varphi_a-\varphi_b$ is orthogonal to the first $d$ coordinates, i.e., $\varphi_a-\varphi_b$ is parallel to the $(d+1)$-st coordinate axis.
Let $\overline{\varphi}(C):=\bigcup_{e\in E}\overline{\varphi}(e)$ denote the induced geometric subrepresenation of $C$.
We claim that the function $f:\overline{\varphi}(C)\to \R^d$ obtained by restricting to the first $d$ coordinates is injective. Thus $\varphi':=f\circ \varphi$ yields a representation  of $C$ in $\R^d$.

For the purpose of a contradiction, suppose that $f$ is not injective. Then there exist two distinct points $p =(p_1,\ldots,p_{d+1})$ and $q =(q_1,\ldots,q_{d+1})$ with $p,q\in \overline{\varphi}(C)$ such that $(p_1,\ldots,p_{d}) = (q_1,\ldots,q_{d})$ and $p_{d+1} \neq q_{d+1}$.
Without loss of generality, we may assume that $p_{d+1} > q_{d+1}$.
Consider the plane $P$ spanned by $\varphi_a,\varphi_b,p$. 
Note that $q\in P$. For an illustration, see \cref{fig:monotone}.
\begin{figure}[htb]
\centering
\begin{subfigure}[t]{0.45\textwidth}
\centering
\includegraphics[page =3]{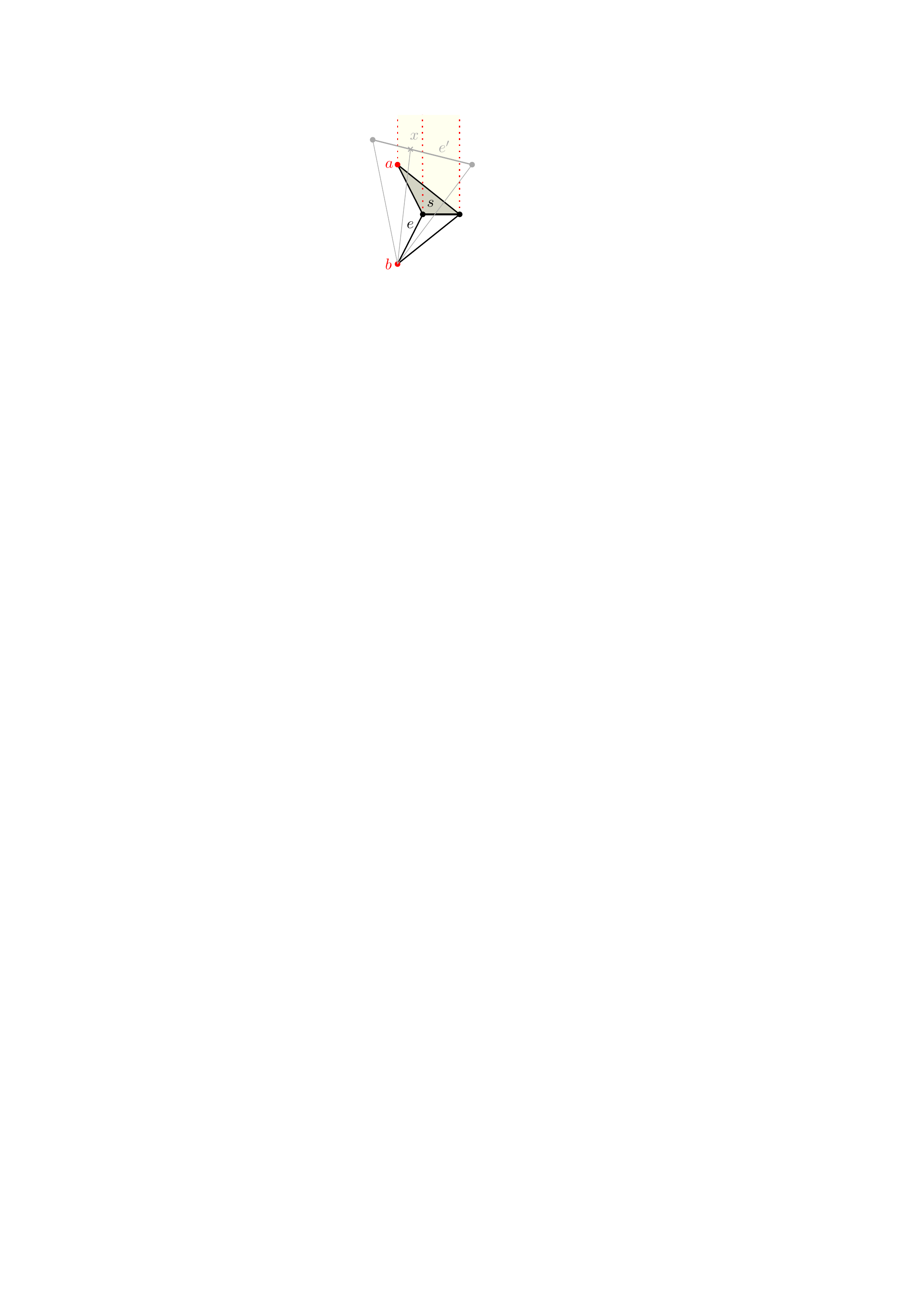}
\caption{}
\label{fig:monotone}
\end{subfigure}
\hfil
\begin{subfigure}[t]{0.45\textwidth}
\centering
\includegraphics[page =4]{figures/join.pdf}
\caption{}
\label{fig:dimRed}
\end{subfigure}
\caption{Illustration for the proof of \cref{lem:dimensionReduction}. The geometric embedding~$\varphi$ of $C^+$, gives a monotone embedding of $C$, otherwise we can find an intersection.}
\label{fig:dimReduction}
\end{figure}
Let us denote with $e_p \in E$ and $e_q\in E$ any choice of hyperedges such that
$p\in \overline{\varphi}(e_p)$ and $q\in \overline{\varphi}(e_q)$.
Consider the two open segments $\seg^\circ(\varphi_a,q) \in \overline{\varphi}(e_q\cup a)$
and $\seg^\circ(\varphi_b,p) \in \overline{\varphi}(e_p\cup b)$. Clearly, these open segments intersect in a point $x$, as illustrated \cref{fig:monotone}.
Because $\varphi$ is a geometric embedding, it holds that 
\[x\in \overline{\varphi}(e_q\cup a)\cap\overline{\varphi}(e_p\cup b)=\overline{\varphi}(e_q\cap e_p)=\overline{\varphi}(e_q)\cap \overline{\varphi}(e_p).\]
 In particular, this implies that $x\in\overline{\varphi}(e_q)$ and thus that $x\in\seg^\circ(\varphi_a,q)\cap \overline\varphi(e_q)$.
However, because $\overline\varphi(e_q\cup a)$ is a simplex, $\varphi_a$ does not lie in the span$(\overline\varphi(e_q))$ and thus $\seg^\circ(\varphi_a,q)\cap \overline\varphi(e_q)=\emptyset$.
A contradiction. 
\end{proof}

\section{Conclusion}
We established the computational complexity of \GEM{k}{d} for all $d\geq 3$ and $k\in\{d-1,d\}$. 
In particular, we showed that for these values  it is complete for  \ER to distinguish PL embeddable $k$-complexes in $\R^d$ from geometrically embeddable ones.
Arguably, \GEM{2}{3} is the most interesting case.

Investigating the computational complexity for the remaining open entries in \cref{tab:results} remains for future work. 
We strengthen the conjecture of Skopenkov~\cite{skopenkov2020invariants} as follows.
\begin{conjecture*}
The problem \GEM{k}{d} is \ER-complete for all $k,d$ such that $\max\{3,k\}\leq d\leq 2k$.
\end{conjecture*}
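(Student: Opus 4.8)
The plan is to reduce the conjecture to a single family of boundary cases and then reuse the machinery already assembled. Membership in \ER holds for all $k,d$ by \cref{lem:membership}, so only \ER-hardness in the range $\max\{3,k\}\le d\le 2k$ remains. The key observation is that the codimension-preserving reduction \cref{lem:OneUp}, which sends \GEM{k}{d} to \GEM{k+1}{d+1}, lets one propagate hardness along each diagonal $d-k=\text{const}$. Writing $c:=d-k$ for the codimension, a point $(k,d)$ lies in the target region exactly when $0\le c\le k$ and $d\ge 3$. Hence it suffices to establish hardness at the upper-boundary points $(k,d)=(c,2c)$ for every $c\ge 2$: from \GEM{c}{2c}, iterating \cref{lem:OneUp} yields \GEM{k}{k+c} for all $k\ge c$, which is precisely the set of codimension-$c$ points of the region. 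Together with the already-proved diagonals (\GEM{2}{3} from \cref{thm:TwoThree} for $c=1$, and \GEM{3}{3} from \cref{lem:threethree} for $c=0$), proving \ER-hardness of \GEM{c}{2c} for all $c\ge 2$ would tile the whole region.

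First I would therefore attack the line $d=2k$ directly. The natural source of hardness is no longer \stretcha but the higher-rank universality theorem of Mn\"ev: realizability of oriented matroids of rank $r$ (equivalently, realizability of order types of point configurations in $\R^{r-1}$) is \ER-complete. I would mimic the architecture of the proof of \cref{thm:TwoThree} one dimension at a time: lift the abstract configuration witnessing a rank-$2k$ oriented matroid into $\R^{2k}$, replace each ground-set element by a \emph{special $k$-face}, sheath each special face inside a \emph{tunnel} realized as a triangulated $(2k-1)$-sphere, identify tunnel sections according to the incidences prescribed by the oriented matroid, and cap the construction with an apex (or a pair of apices, as in \cref{lem:dimensionReduction}) whose link encodes a rigid frame. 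As in the planar case, one then places a small sphere around the apex, projects the special faces onto it, and tries to read off the oriented matroid from the resulting arrangement of spherical $k$-flats.

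The hard part will be the rigidity step. In $\R^3$ the entire argument hinged on the fact that all crossing-free drawings of an essentially $3$-connected planar graph on $S^2$ are equivalent (\cref{clm:ExtendG}), which pinned the combinatorics of the projection and let the above/below relations of \cref{clm:aboveBelow,clm:tunnelRoof} be recovered unambiguously. There is no clean analog in higher dimensions: a triangulated $(d-1)$-sphere with $d-1\ge 2$ need not be combinatorially rigid, its realizations on $S^{d-1}$ are generally not unique up to equivalence, and non-polytopal and flexible triangulations abound. The crux of the conjecture is thus to manufacture a neighbourhood gadget for the apex whose spherical realization \emph{is} combinatorially forced --- a higher-dimensional substitute for essential $3$-connectivity --- so that the projected special faces again inherit the intended intersection pattern.

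A complementary route, which would side-step building new boundary gadgets, is to seek a \emph{codimension-increasing} reduction from \GEM{k}{d} to \GEM{k}{d+1} valid for $k\le d<2k$; chained with the known diagonals it would climb each column up to $d=2k$ (and must stop there, since \GEM{k}{2k+1} is trivial, as every $k$-complex embeds in $\R^{2k+1}$). The obstacle here is dual to the one above: such a reduction needs a gadget that forces an embedding to collapse into a hyperplane \emph{without} raising the complex dimension, whereas the only collapsing trick currently available --- the join with $S^0$ in \cref{lem:dimensionReduction} --- necessarily increases $k$ by one. Designing a dimension-pinning gadget that keeps $k$ fixed is, I expect, the main technical obstacle, and resolving either this or the rigidity question above is what stands between \cref{thm:main} and the full conjecture.
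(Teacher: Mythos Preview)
The statement you were given is a \emph{conjecture} in the paper, posed in the conclusion as a strengthening of Skopenkov's conjecture. The paper does not prove it; the main theorem covers only the two top diagonals $k\in\{d-1,d\}$, and the authors explicitly leave the remaining entries of the table for future work. There is therefore no proof in the paper to compare against.

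Your proposal is not a proof either, and you are candid about this: you reduce the conjecture via \cref{lem:OneUp} to the boundary cases \GEM{c}{2c} for $c\ge 2$, sketch a higher-dimensional tunnel/apex construction sourced from higher-rank Mn\"ev universality, and then identify the genuine obstruction---there is no known analog of \cref{clm:ExtendG} (uniqueness of spherical embeddings of essentially $3$-connected graphs) for triangulated $(d-1)$-spheres with $d\ge 4$. That diagnosis is accurate. The diagonal reduction structure you lay out is correct and consistent with how the paper uses \cref{lem:OneUp}; the membership half is indeed handled by \cref{lem:membership}. But the ``hard part'' you isolate is precisely why the conjecture is open: without a rigidity gadget forcing the combinatorics of the apex link, the projection argument of \cref{clm:tunnelRoof}--\cref{clm:combinatorics} has no higher-dimensional analogue, and your alternative route via a codimension-increasing reduction that keeps $k$ fixed is, as you note, equally unavailable with current tools. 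So your write-up is a reasonable research programme, not a proof, and it matches the paper's own assessment that these cases remain open.
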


\section*{Acknowledgements.}
We thank Arkadiy Skopenkov for his kind and swift help with acquiring literature. 
We thank Martin Tancer for pointing out a mistake in a previous version of this manuscript.
Mikkel Abrahamsen is part of Basic Algorithms Research Copenhagen (BARC), generously supported by the VILLUM Foundation grant 16582.
Linda Kleist is generously supported by a postdoc fellowship of the German Academic Exchange Service (DAAD).
Tillmann Miltzow is generously supported by the Netherlands Organisation for Scientific Research (NWO) under project no. 016.Veni.192.250.

\begin{table}[htp]
\centering
\caption{A glossary for notions used in the proof of \cref{thm:TwoThree}.}
\label{tab:glossary}
\begin{tabular}{c|c}
symbol  &  meaning \\
\hline
$L= \{\ell_1,\ldots, \ell_n\}$ & pseudoline arrangement\\
$n$ & number of pseudolines in $L$, \Cref{fig:pseudolines} \\
$\ell_0$ & additional pseudoline \\
$L^*:=L\cup \{\ell_0\}$ & $L$ together with $\ell_0$ \\
$\triangle$ & the \textit{\htriangle}, all intersections of $L^*$ are contained inside\\
$C = (V,E)$ & the simplicial complex that we construct\\
$\Tall_i$ & tunnel around pseudoline $\ell_i$, \Cref{fig:tunnel}\\
$\Tint_i$ & tunnel $\Tall_i$ without first and last two sections\\
section & part of tunnel, \Cref{fig:section}\\
tunnel bottom & the part of tunnel $\Tall_i$ that lies in the plane  $z = i$ in step 1\\
tunnel roof & the part of tunnel which is not in the bottom, see \cref{fig:tunnelAbove}\\
left/right roof path &  tunnel paths shared by roof and bottom.  \\
central roof path & tunnel path that is disjoint from left/right roof path \\
$e_i = (s_i,t_i)$ & \textit{special edge} of $C$ that is meant to represent $\ell_i$ \\
$u$ & apex (taking the role of an observer high above) \\
$H$ & graph in the third step of the construction, see black graph in \cref{fig:3-connected}\\
$G^+$ & an essentially $3$-connected planar graph induced by the neighborhood of $u$ \\
$f_0$ & outer face of $G^+$ \\
$G $ & subgraph of $G^+$ that is inside the \htriangle \\
$\overline C$ & the complex $C$ consists of two copies of $\overline C$ with the apex identified\\
$\varphi$ & geometric embedding of $C$ \\
tent & \Cref{fig:RedinTunnel} \\
\sloop~\loop & triangle that is shared by two sections\\
\rafter & two roof edges of a \sloop\\
wedge $W_i$ & defined by the apex $u$ and $e_i$, \Cref{fig:aboveBelow} \\
plane $P_i$ & defined by the apex $u$ and $e_i$\\
$p>e_i$ & segment $pu$ does not cross $e_i$  \\
$p<e_i$ & segment $pu$ does cross $e_i$  \\
$S$ & sphere around apex $u$\\
$D$, $D^+$ & projection of $\varphi(G^+)$ onto $S$ yielding a drawing of $G$ and $G^+$\\
$a_i$ & projection of $\varphi(e_i)$ onto $S$ \\
\A  & arc arrangement of all $a_i$~\Cref{fig:apexB}
\end{tabular}
\end{table}

\newpage

\def\bibfont{\small}

\printbibliography

\end{document}